\newcolumntype{C}{>{\centering\arraybackslash}X} 
\newtheorem{thm}{Theorem}
\newtheorem{rem}{Remark}
\newtheorem{pos}{Proposition}
\newtheorem{proof}{proof}
\begin{document}

\title{Multi-IRS Aided ISAC System: Multi-Path Exploitation Versus Reduction}

\author{Guangji Chen,
        Qingqing Wu,
        Shihang Lu,
        Meng Hua,
        and Wen Chen \vspace{-22pt}
        \thanks{Guangji Chen is with Nanjing University of Science and Technology, Nanjing 210094, China (email: guangjichen@njust.edu.cn). Qingqing Wu and Wen Chen are with Shanghai Jiao Tong University, 200240, China (e-mail: qingqingwu@sjtu.edu.cn; wenchen@sjtu.edu.cn). Shihang Lu is with Southern University of Science and Technology, Shenzhen 518055, China(emails: lush2021@mail.sustech.edu.cn). Meng Hua is with Imperial College London, London SW7 2AZ, UK (e-mail: m.hua@imperial.ac.uk).}}

\maketitle
\vspace{-3pt}
\begin{abstract}
This paper investigates a multi-intelligent reflecting surface (IRS) aided integrated sensing and communication (ISAC) system, where multiple IRSs are strategically deployed not only to assist the communication from a multi-antenna base station (BS) to a multi-antenna communication user (CU), but also enable the sensing service for a point target in the non-line-of-sight (NLoS) region of the BS. First, we propose a hybrid multi-IRS architecture, which consists of several passive IRSs and one semi-passive IRS equipped with both active sensors and reflecting elements. To be specific, the active sensors are exploited to receive the echo signals for estimating the target's angle information, and the multiple reflecting paths provided by multi-IRS are employed to improve the degree of freedoms (DoFs) of communication. Under the given budget on the number of total IRSs elements, we theoretically show that increasing the number of deployed IRSs is beneficial for improving DoFs of spatial multiplexing for communication while increasing the  Cr\'amer-Rao bound (CRB) of target estimation, which unveils a fundamental tradeoff between the sensing and communication performance. To characterize the rate-CRB tradeoff, we study a rate maximization problem, by optimizing the BS transmit covariance matrix, IRSs phase-shifts, and the number of deployed IRSs, subject to a maximum CRB constraint. Analytical results reveal that the communication-oriented design becomes optimal when the total number of IRSs elements exceeds
a certain threshold, wherein the relationships of the rate and CRB with the number of IRS elements/sensors, transmit power, and the number of deployed IRSs are theoretically derived and demystified. Simulation results validate our theoretical findings and also demonstrate the superiority of our proposed designs over the benchmark schemes.

\end{abstract}

\begin{IEEEkeywords}
IRS, ISAC, transceiver design.
\end{IEEEkeywords}

\IEEEpeerreviewmaketitle

\section{Introduction}
\vspace{-2pt}
Future sixth-generation (6G) wireless networks are expected to support numerous intelligent applications such as autonomous driving, low-altitude economy, and industrial automation, which pose the demand for ultra-high-capacity, ultra-high reliable, and ultra-low-latency communications, as well as the high-resolution/accuracy sensing \cite{zhang2021enabling}. To this end, integrated sensing and communication (ISAC) has been identified as a crucial usage scenario for 6G, which enables the dual use of network infrastructures and wireless signals for providing both the communication and sensing services simultaneously \cite{liu2022integrated}. By coordinating the system design in an integrated hardware platform, ISAC is able to improve the spectrum utilization efficiency and reduce the system cost, thereby promoting both the sensing and communication performance through wireless technologies \cite{liu2022survey}. With advanced technologies such as large antenna arrays and millimeter (mmWave), ISAC has recently drawn tremendous research interest from different research perspectives, e.g., unified ISAC waveforms \cite{liu2022integrated}, wideband ISAC \cite{liu2022integrated}, network-level ISAC \cite{meng2024network}, and transceiver designs for multi-antenna ISAC \cite{lu2024random,liu2020joint}.

Inspired by the spatial multiplexing/diversity gains provided by multiple-input multiple-output (MIMO) techniques, there have been substantial prior works investigating the transceiver designs in MIMO ISAC systems \cite{liu2020joint,lu2022degrees,hua2023optimal,wang2022noma,mu2022noma,liu2021cramer,hua2023mimo,ren2023fundamental}. Due to the scarce power and spectrum resources shared between the functions of sensing and communication, the resource allocations and transmit strategies need to be carefully designed to balance the performance of sensing and communication. However, the transceiver design for MIMO ISAC is challenging owing to the distinct performance metrics for sensing and communications, which leads to the fact that the design principle for the dedicated communication or sensing may not be applicable to ISAC systems \cite{liu2022survey}. Unlike the wide-adopted data rate for the communication metric, the sensing performance metrics highly depend on particular tasks (e.g., target estimation or detection) \cite{liu2022survey}. As a remedy, there are two lines of research works focusing on the sensing metrics of target illumination power for detection \cite{liu2020joint,lu2022degrees,hua2023optimal,wang2022noma,mu2022noma} and Cr\'amer-Rao bound (CRB) for estimation \cite{liu2021cramer,hua2023mimo,ren2023fundamental}, respectively. For instance, the work \cite{hua2023optimal} investigated a MIMO ISAC over broadcast channels by optimizing the transmit beamforming to balance the tradeoff between the target illumination power for sensing and the received signal-to-interference-plus-noise ratio (SINR) for communication. Regarding the estimation task, the literature \cite{liu2021cramer,hua2023mimo,ren2023fundamental} studied the tradeoff between the CRB for target estimation and achievable rate for communication users under the setups of point-to-point MIMO channels, broadcast channels, and multicast channels, respectively.

Despite the research progress on transceiver designs, the performance of ISAC systems is highly affected by the wireless propagation environment. For example, wireless sensing tasks rely on the existence of line-of-sight (LoS) links for extracting useful information on targets, and the communication services require both the high received power and spatial multiplexing degree of freedoms (DoFs) for improving the achievable rate \cite{liu2022integrated}. However, due to randomly distributed obstacles such as trees/buildings, the LoS links from the BS to the communication user (CU)/target may be potentially blocked, thereby degrading both the sensing and communication performance. Hence, relying purely on transceiver design may not provide ubiquitous sensing and communication services over complex channel environments \cite{song2025overview}. To address this challenge faced in ISAC systems, intelligent reflecting surface (IRS) has emerged as a transformative technology to proactively reconfigure wireless channel environments by properly tuning the phase shifts of incident signals on each reflecting element \cite{wu2019beamforming,pan2022overview,di2020smart,chen2022irs1,liu2021reconfigurable,chen2023MA,wu2025intelligent}. In terms of wireless communications, IRS is able to improve the received power \cite{wu2019beamforming,chen2022active,peng2023hybrid,chen2024intelligent1}, increase the channel rank \cite{wu2025intelligent,chen2023channel}, extend the wireless coverage \cite{9351782,chen2023static}, and mitigate co-channel interference through phase-shifts and deployment design \cite{chen2023fundamental,jiang2022interference,chen2024intelligent}. Regarding the wireless sensing, the proper deployment of IRS can create newly virtual LoS links for the targets in the blocked region, which is helpful for addressing the non-line-of-sight (NLoS) sensing issue. Additionally, the high passive beamforming gain provided by the designs of IRS's phase-shifts substantially enhances the echo power, thereby improving the sensing accuracy \cite{song2025overview}.

Given the appealing benefits of IRSs for both the wireless sensing and communications, prior works have studied IRS aided ISAC systems under various setups. Generally, there are two typical IRS architectures for enhancing ISAC, namely fully-passive IRS (P-IRS) and semi-passive IRS (S-IRS) \cite{shao2024intelligent}. The P-IRS operates without dedicated sensors and thus the BS exploits the echo signals that pass through the BS-IRS-target-IRS-BS link for sensing \cite{hua20243d,liu2022joint,zuo2023exploiting,meng2022intelligent,song2024cramer}. In contrast, the S-IRS is equipped with sensors, thereby receiving and processing the echo signals through the BS-IRS-target-IRS link \cite{shao2022target,peng2025semi,wang2023stars,fang2024multi}. By capturing the considerations of the path-loss and passive beamforming gain, the work \cite{song2023fully} provided a theoretical performance comparison for the two architectures and unveiled their respective operating region. From the viewpoint of performance optimization, the works \cite{liu2022joint,zuo2023exploiting,meng2022intelligent,song2024cramer} jointly optimized the BS transmit beamforming and IRS phase-shifts in P-IRS aided ISAC systems by considering different sensing metrics of the target illumination power and CRB. For the S-IRS architecture, prior works \cite{wang2023stars,fang2024multi} considered the employment of the intelligent reflecting/refracting surface and multi-IRS, respectively, to enhance the wireless coverage of ISAC systems.

Despite the aforementioned advancements in IRS aided ISAC, some fundamental issues still remain unaddressed. First, the IRSs deployment guideline for the performance tradeoff between the sensing and communication is unclear. Relative to the IRS phase-shifts design, IRS deployment provides a complementary role in reconfiguring wireless channels \cite{wu2025intelligent}. Due to the different performance metrics of wireless sensing and communication, the two system functionalities may prefer distinct channel conditions. Given the budget on the total number of IRS elements, deploying more small-size IRSs is able to proactively increase the number of paths, thereby improving the spatial multiplexing for MIMO communications. Different from the fact that multiple paths can be exploited to enhance the MIMO communications, all the paths incurred by deploying multi-IRS may be not useful for improving the sensing quality since the sensing task is generally dependent on the LoS path containing the useful information on the target \cite{lu2022degrees}. Therefore, deploying multiple IRSs is expected to have distinct impacts on the sensing and communication performance from the viewpoint of multi-path exploitation versus reduction. Second, the impact of deploying IRSs on the ISAC transceiver designs remains unknown from theoretical perspectives. In ISAC systems without IRS, using communication-oriented transmit signals may not meet the high sensing performance requirement. By deploying massive IRS elements, the resulting favorable propagation environment is expected to enable a high sensing quality even under the communication-oriented design. This is an essential consideration for incorporating the communication-centric designs in ISAC systems.

To shed light on the above considerations, this paper focused on a fundamental multi-IRS aided MIMO ISAC setup with one multi-antenna BS, multiple distributed IRSs, one multi-antenna CU, and one point target in the NLoS region of the BS. In particular, we consider a typical sensing task with prior target knowledge, which corresponds to the target tracking stage for estimating the angle of the point target with respect to an anchor. To facilitate the sensing, we propose a hybrid multi-IRS architecture, which consists of several P-IRSs and one S-IRS equipped with both the active sensors and reflecting elements. In the proposed hybrid multi-IRS architecture, the rich multi-paths environment created by multi-IRS is employed to improve spatial multiplexing and the S-IRS serves as an anchor to process target echo signals for sensing. Additionally, the passive beamforming gain provided by IRSs phase-shifts designs is able to enhance both the received power at the CU and echo signals at the sensors of the S-IRS. The main contributions of this paper are summarized as follows.

\begin{figure}[!t]
\centering
\includegraphics[width= 0.4\textwidth]{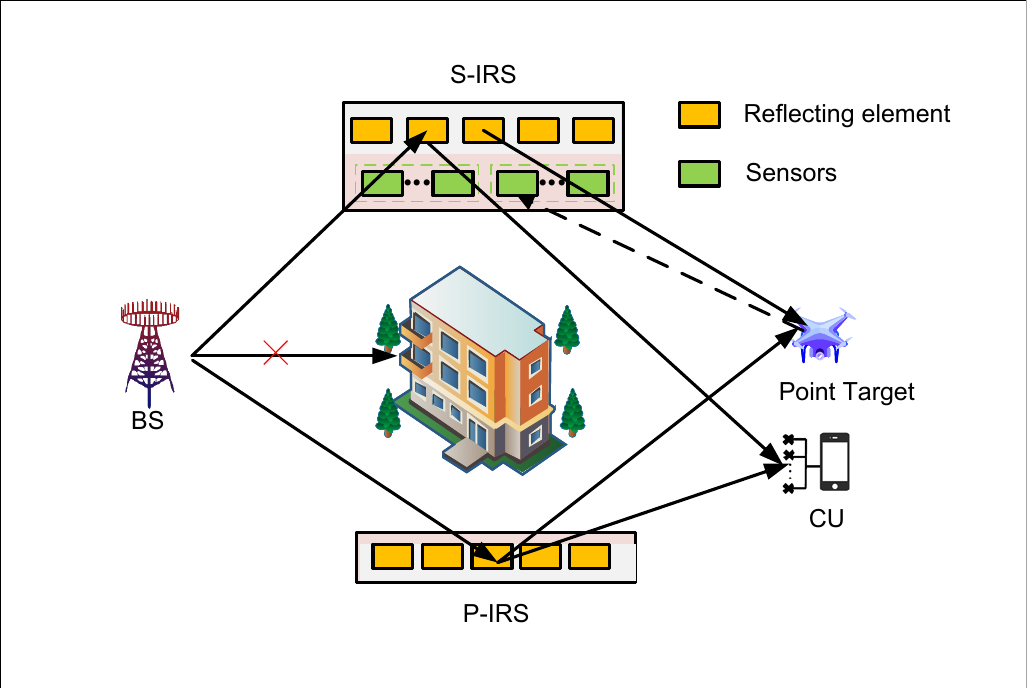}
\DeclareGraphicsExtensions.
\caption{A hybrid multi-IRS aided ISAC system.}
\label{model}
\vspace{-16pt}
\end{figure}
\begin{itemize}
  \item To shed light on the impact of the number of deployed IRSs on the sensing and communication performance, we first derive the closed-form expression of the CRB for target estimation. Based on this, we rigorously prove that the CRB monotonically increases with the number of deployed IRSs while improving DoFs of spatial-multiplexing for communication, which unveils a fundamental tradeoff between the sensing and communication from the viewpoint of multi-path exploitation and reduction.
  \item  To capture the rate-CRB tradeoff, we formulate a rate maximization problem, by optimizing the BS transmit covariance matrix, IRSs phase-shifts, and the number of deployed IRSs, subject to a maximum CRB constraint. By exploiting its particular structure, the problem can be solved in a two-layer manner, where the IRSs phase-shifts and BS transmit covariance matrix are optimized in the inner-layer under the given number of deployed IRSs. In the outer layer, the number of deployed IRSs can be determined via the search over a candidate space.
  \item For the inner optimization problem, we first focus on a special case where the sensing target and CU are co-located to draw useful insights, wherein the optimized BS transmit covariance matrix and IRSs phase-shifts are derived in closed form. Based on this, we theoretically unveil that the communication-oriented design becomes optimal when the total number of IRS elements exceeds a certain threshold, which sheds light on the impact of IRSs on the transceiver design. Then, the relationships of the rate and CRB with the number of IRS elements/sensors, transmit power, and the number of deployed IRSs are theoretically derived.
  \item Finally, an alternating-based successive convex approximation (SCA) algorithm is proposed to extend the results to the general case that the sensing target and CU are separated. Numerical results verify the analytical results and effectiveness of the proposed designs.
\end{itemize}

The remainder of this paper is organized as follows. Section II introduces the system model of the proposed hybrid multi-IRS aided ISAC system. Section III presents the theoretical analysis regarding the impact of the number of deployed IRSs on both the sensing and communication performance, as well as the problem formulation. Section IV provides the solution to the formulated problem and also sheds light on the impact of IRSs on the transceiver designs. Section V presents simulation results to verify the theoretical findings and also to draw useful insights. Finally, Section VI concludes the paper.

\emph{Notations:} Boldface upper-case and lower-case  letter denote matrix and   vector, respectively.  ${\mathbb C}^ {d_1\times d_2}$ stands for the set of  complex $d_1\times d_2$  matrices. For a complex-valued vector $\bf x$, ${\left\| {\bf x} \right\|}$ represents the  Euclidean norm of $\bf x$, ${\rm arg}({\bf x})$ denotes  the phase of   $\bf x$, and ${\rm diag}(\bf x) $ denotes a diagonal matrix whose main diagonal elements are extracted from vector $\bf x$.
For a vector $\bf x$, ${\bf x}^*$ and  ${\bf x}^H$  stand for  its conjugate and  conjugate transpose respectively.   For a square matrix $\bf X$,  ${\rm{Tr}}\left( {\bf{X}} \right)$, $\left\| {\bf{X}} \right\|_2$ and ${\rm{rank}}\left( {\bf{X}} \right)$ respectively  stand for  its trace, Euclidean norm and rank,  while ${\bf{X}} \succeq {\bf{0}}$ indicates that matrix $\bf X$ is positive semi-definite. A circularly symmetric complex Gaussian random variable $x$ with mean $ \mu$ and variance  $ \sigma^2$ is denoted by ${x} \sim {\cal CN}\left( {{{\mu }},{{\sigma^2 }}} \right)$.  ${\cal O}\left(  \cdot  \right)$ is the big-O computational complexity notation.

\section{System model}
As shown in Fig. \ref{model}, we investigate a multi-IRS aided ISAC system, which consists of one millimeter wave (mmWave) BS with ${M_t}$ antennas, multiple IRSs, one CU with ${M_r}$ antennas, and one point target at the NLoS region of the BS. It is assumed that the direct link between the BS and CU/target is blocked due to environmental obstacles, e.g., high buildings. To extend the signal coverage of the BS, multiple IRSs are deployed to assist the data transmission for the CU and the sensing service for the target simultaneously. In the proposed multi-IRS architecture, there are $K$ IRSs, including one S-IRS and $K-1$ P-IRSs. In particular, the S-IRS consists of two parts: reflecting elements and active sensors, where reflecting elements are passive and used for tuning the phases of the incident signal, while the sensors are active and used to receive the echoed signals reflected by the target. All P-IRSs consist of only passive reflecting elements and work in the reflecting mode. Without loss of generality, the $1$-th IRS is referred to the S-IRS, and the $k$-th IRS is referred to the P-IRS with $k \in \left\{ {2, \ldots ,K} \right\}$. Let ${\cal K} \buildrel \Delta \over = \left\{ {1, \ldots ,K} \right\}$ denote the set of IRSs. The S-IRS is equipped with ${N_1}$ reflecting elements and ${N_r}$ sensors, whereas the $k$-th P-IRS, $k \in {\cal K}/\left\{ 1 \right\}$, is equipped with ${N_k}$ reflecting elements. The number of passive elements equipped at the $k$-th IRS is ${N_k} = N/K$, $\forall k$, which satisfies $\sum\nolimits_{k = 1}^K {{N_k}}  = N$ with $N$ denoting the total number of available passive elements.

In particular, we consider an ISAC transmission period consisting of $T$ symbols, and the set of symbols is denoted by ${\cal T} \buildrel \Delta \over = \left\{ {1, \ldots ,T} \right\}$. For achieving full DoFs for simultaneous sensing and communication, the communication signals and dedicated sensing signals are transmitted by the BS. To this end, the signal transmitted by the BS at symbol $t$ is given by
\begin{align}\label{ISAC_signal}
{\bf{x}}\left( t \right) = {{\bf{W}}_c}{{\bf{s}}_c}\left( t \right) + {{\bf{W}}_s}{{\bf{s}}_{{\rm{sen}}}}\left( t \right), t \in {\cal T},
\end{align}
where ${{\bf{s}}_c}\left( t \right) \in {\mathbb{C}^{d \times 1}}$ with $d = \min \left\{ {{M_t},{M_r}} \right\}$ denotes the communication signals intended for the CU satisfying ${{\bf{s}}_c}\left( t \right) \sim {\cal C}{\cal N}\left( {{\bf{0}},{{\bf{I}}_d}} \right)$ and ${{\bf{W}}_c}\in {\mathbb{C}^{{M_t} \times d}}$ represents the associated communication beamformer. Similarly, ${{\bf{s}}_{{\rm{sen}}}}\left( t \right)\in {\mathbb{C}^{{M_t} \times 1}}$ denotes the sensing signal satisfying ${\mathop{\mathbb{E}}\nolimits} \left\{ {{{\bf{s}}_{{\rm{sen}}}}\left( t \right)} \right\} = {{\bf{0}}_{{M_t} \times 1}}$ and ${\mathop{\mathbb{E}}\nolimits} \left\{ {{{\bf{s}}_{{\rm{sen}}}}\left( t \right){\bf{s}}_{{\rm{sen}}}^H\left( t \right)} \right\} = {{\bf{I}}_{{M_t}}}$. Additionally, we assume that communication signals and sensing signals are statistically independent, i.e., ${\mathop{\mathbb{E}}\nolimits} \left\{ {{{\bf{s}}_c}\left( t \right){\bf{s}}_{{\rm{sen}}}^H\left( t \right)} \right\} = {{\bf{0}}_{d \times {M_t}}}$.

Assuming that $T$ is large sufficiently, the sample-wise covariance matrix of the signal ${\bf{x}}\left( t \right)$ can be equivalently replaced by its statistical covariance matrix as
\begin{align}\label{covariance_signal}
{{\bf{R}}_x} &= \frac{1}{T}\sum\limits_{t = 1}^T {\bf{x}} \left( t \right){{\bf{x}}^H}\left( t \right) \approx {\mathbb{E}}\left\{ {{\bf{x}}\left( t \right){{\bf{x}}^H}\left( t \right)} \right\}\nonumber\\
&={{\bf{W}}_c}{\bf{W}}_c^H + {{\bf{W}}_s}{\bf{W}}_s^H = {{\bf{R}}_c} + {{\bf{R}}_s},
\end{align}
where ${{\bf{R}}_c} = {{\bf{W}}_c}{\bf{W}}_c^H$ and ${{\bf{R}}_s} = {{\bf{W}}_s}{\bf{W}}_s^H$. The maximum transmit power of the BS is denoted by ${P_{\max }}$ and thus we have
\begin{align}\label{maximum_power_BS}
{\mathbb{E}}\left\{ {{\bf{x}}\left( t \right){{\bf{x}}^H}\left( t \right)} \right\} = {\mathop{\rm Tr}\nolimits} \left( {{{\bf{R}}_c}} \right) + {\mathop{\rm Tr}\nolimits} \left( {{{\bf{R}}_s}} \right) \le {P_{\max }}.
\end{align}
\subsection{Communication Model}
First, we introduce the communication model from the BS to the CU. Let ${{\bf{G}}_k}\in {\mathbb{C}^{{N_k} \times {M_t}}}$ and ${{\bf{H}}_{{\rm{r}},k}}\in {\mathbb{C}^{{M_r} \times {N_k}}}$ denote the channel matrixes from the BS to the $k$-th IRS and from the $k$-th IRS to the CU, respectively. Considering the fact that the channel power gain of the LoS path is more dominant than that of NLoS paths in mmWave systems, the LoS channel model is adopted in this work. Hence, ${{\bf{G}}_k}$ and ${{\bf{H}}_{{\rm{r}},k}}$ can be expressed as follows
\begin{align}\label{B2I_channel}
{{\bf{G}}_k} = {\rho _{{\rm{BI}},k}}{{\bf{b}}_{{\rm{I}},k}}\left( {\mu _{{\rm{BI}},k}^{\rm{A}}} \right){\bf{a}}_{\rm{B}}^H\left( {\mu _{{\rm{BI}},k}^{\rm{D}}} \right),\forall k,
\end{align}
\begin{align}\label{I2U_channel}
{{\bf{H}}_{{\rm{r}},k}} = {\rho _{{\rm{IU}},k}}{{\bf{a}}_{\rm{U}}}\left( {\mu _{{\rm{IU}},k}^{\rm{A}}} \right){\bf{b}}_{{\rm{I}},k}^H\left( {\mu _{{\rm{IU}},k}^{\rm{D}}} \right),\forall k,
\end{align}
where $\rho _{{\rm{BI}},k}^2$ and $\rho _{{\rm{IU}},k}^2$ denote the large-scale path-loss of the channels from the BS to the $k$-th IRS and from the $k$-th IRS to the CU, respectively. ${\mu _{{\rm{BI}},k}^{\rm{D}}}$ and ${\mu _{{\rm{BI}},k}^{\rm{A}}}$ denote the frequency angle-of-departure (AoD) from the BS to the $k$-th IRS and the angle-of-arrival (AoA) at the IRS, respectively. Similarly, ${\mu _{{\rm{IU}},k}^{\rm{D}}}$ and ${\mu _{{\rm{IU}},k}^{\rm{A}}}$ denote the  frequency angle-of-departure (AoD) from the $k$-th IRS to the CU and the  frequency angle-of-arrival (AoA) at the CU, respectively. In \eqref{B2I_channel} and \eqref{I2U_channel}, ${{\bf{a}}_{\rm{B}}}\left( x \right)$, ${{\bf{b}}_{{\rm{I,}}k}}\left( x \right)$, and ${{\bf{a}}_{\rm{U}}}\left( x \right)$ denote the array response vectors of the BS, the $k$-th IRS, and the CU, respectively, which are given by ${{\bf{a}}_{\rm{B}}}\left( x \right) = {\left[ {{e^{ - j\frac{{{M_t} - 1}}{2}\pi x}}, \ldots ,{e^{j\frac{{{M_t} - 1}}{2}\pi x}}} \right]^T}$, ${{\bf{b}}_{{\rm{I,}}k}}\left( x \right) \!\!=\!\! {\left[ {{e^{ - j\frac{{{N_k} - 1}}{2}\pi x}}, \ldots ,{e^{j\frac{{{N_k} - 1}}{2}\pi x}}} \right]^T}$, and ${{\bf{a}}_{\rm{U}}}\left( x \right) \!\!=\!\! {\left[ {{e^{ - j\frac{{{M_r} - 1}}{2}\pi x}}, \ldots ,{e^{j\frac{{{M_r} - 1}}{2}\pi x}}} \right]^T}$. Let ${{\bf{\Theta }}_k} = {\mathop{\rm diag}\nolimits} \left( {{e^{j{\theta _{k,1}}}}, \ldots ,{e^{j{\theta _{k,{N_k}}}}}} \right)$ denote the reflection matrix of the $k$-th IRS with ${\theta _{k,n}} \in \left[ {0,2\pi } \right)$, $\forall n \in \left\{ {1, \ldots ,{N_k}} \right\}$. Then, the equivalent channel matrix from the BS to the CU can be written as
\begin{align}\label{BS_CU_channel}
{{\bf{H}}_c} = \sum\nolimits_{k = 1}^K {{{\bf{H}}_{{\rm{r}},k}}{{\bf{\Theta }}_k}{{\bf{G}}_k}}.
\end{align}
The received signal by the CU at symbol $t$ is
\begin{align}\label{received_signal}
{{\bf{y}}_c}\left( t \right) = {{\bf{H}}_c}{{\bf{W}}_c}{s_c}\left( t \right) + {{\bf{H}}_c}{{\bf{W}}_s}{s_{\rm{sen}}}\left( t \right) + {{\bf{n}}_c}\left( t \right), t \in {\cal T}
\end{align}
where ${{\bf{n}}_c}\left( t \right) \sim {\cal C}{\cal N}\left( {0,{\sigma ^2}{{\bf{I}}_{{M_r}}}} \right)$ denotes the additive white Gaussian noise (AWGN) at the CU. Note that the sensing signal ${{\bf{x}}_{{\rm{sen}}}}\left( t \right) = {{\bf{W}}_s}{s_{{\rm{sen}}}}\left( t \right)$ is deterministic sequences or pseudo-random and thus it can be known at the CU. Assuming that the CU has the capability to cancel the sensing signal interference, the received signal at the CU can be rewritten as
\begin{align}\label{received_signal1}
{{{\bf{\bar y}}}_c}\left( t \right) = {{\bf{H}}_c}{{\bf{W}}_c}{s_c}\left( t \right) + {{\bf{n}}_c}\left( t \right),t \in {\cal T}.
\end{align}
Therefore, the achievable rate of the CU is given by
\begin{align}\label{achievable_rate}
{R_c} = {\log _2}\det \left( {{{\bf{I}}_{{M_r}}} + \frac{{{{\bf{H}}_c}{{\bf{R}}_c}{\bf{H}}_c^H}}{{{\sigma ^2}}}} \right).
\end{align}
\subsection{Sensing Model}
Next, we illustrate the target sensing model. With dedicated sensors equipped at the S-IRS, NLoS target sensing can be directly performed at the sensors of the S-IRS based on the received signals. The received signals at sensors come from three links: BS-IRSs (passive reflecting elements)-target-sensors link, BS-IRSs-sensors link, and BS-sensors link. To characterize these links, the wireless channels from the $k$-th IRS to the target and from the target to sensors are denoted by ${\bf{h}}_{{\rm{IT,}}k}^H\in {\mathbb{C}^{{1} \times {N_k}}}$ and ${{\bf{h}}_{{\rm{TS}}}}\in {\mathbb{C}^{{N_r} \times {1}}}$, which are given by
\begin{align}\label{target_channel}
{\bf{h}}_{{\rm{IT}},k}^H = {\rho _{{\rm{IT,}}k}}{\bf{b}}_{{\rm{I}},k}^H\left( {\mu _{{\rm{IT}},k}^{\rm{D}}} \right),{{\bf{h}}_{{\rm{TS}}}} = {\rho _{{\rm{TS}}}}{{\bf{a}}_s}\left( {{\mu _{\rm{T}}}} \right).
\end{align}
In \eqref{target_channel}, $\rho _{{\rm{IT}},k}^2$ and $\rho _{{\rm{TS}}}^2$ represent the corresponding large scale path-loss, and ${{\bf{a}}_s}\left( x \right) = {\left[ {{e^{ - j\frac{{{N_r} - 1}}{2}\pi x}}, \ldots ,{e^{j\frac{{{N_r} - 1}}{2}\pi x}}} \right]^T}$. ${\mu _{{\rm{IT}},k}^{\rm{D}}}$ and ${{\mu _{\rm{T}}}}$ are the frequency AoD from the $k$-th IRS to the target and frequency AoA from the target to sensors, respectively. The channels from the BS to sensors and from the $k$-th IRS, $k \in {\cal K}/\left\{ 1 \right\}$, to sensors are denoted by ${{\bf{G}}_{{\rm{BS}}}}\in {\mathbb{C}^{{N_r} \times {M_t}}}$ and ${{\bf{H}}_{{\rm{IS,}}k}}\in {\mathbb{C}^{{N_r} \times {N_k}}}$, respectively, which can be modeled similarly and thus these are omitted here.

Therefore, the received signals by the sensors of the S-IRS in symbol $t$, $t \in {\cal T}$ is given by
\begin{align}\label{received_signals_sensors}
{{\bf{y}}_s}\left( t \right) =& \beta \sum\nolimits_{k = 1}^K {{{\bf{h}}_{{\rm{TS}}}}{\bf{h}}_{{\rm{IT,}}k}^H{{\bf{\Theta }}_k}{{\bf{G}}_k}{\bf{x}}\left( t \right)}  + {{\bf{G}}_{{\rm{BS}}}}{\bf{x}}\left( t \right)\nonumber\\
& + \sum\nolimits_{k = 2}^K {{{\bf{H}}_{{\rm{IS,}}k}}{{\bf{\Theta }}_k}{{\bf{G}}_k}{\bf{x}}\left( t \right)}  + {{\bf{n}}_s}\left( t \right),
\end{align}
where $\beta  \sim {\cal C}{\cal N}\left( {0,1} \right)$ denotes the target radar cross section and ${{\bf{n}}_s}\left( t \right) \sim {\cal C}{\cal N}\left( {0,{\sigma _s^2}{{\bf{I}}_{{N_r}}}} \right)$ represents the noise at the sensors of S-IRS, which may contain the clutter from environment. Note that the locations of the BS and IRSs are fixed, the second term and the third term in the right hand of side in \eqref{received_signals_sensors} can be known by the sensor in advance via offline estimation, which can be perfectly canceled. After removing ${{\bf{G}}_{{\rm{BS}}}}{\bf{x}}\left( t \right)$ and $\sum\nolimits_{k = 2}^K {{{\bf{H}}_{{\rm{IS,}}k}}{{\bf{\Theta }}_k}{{\bf{G}}_k}{\bf{x}}\left( t \right)}$, we can transform \eqref{received_signals_sensors} into
\begin{align}\label{received_signals_sensors_modify}
{{{\bf{\bar y}}}_s}\left( t \right) \!=\! \beta {\rho _{{\rm{TS}}}}{{\bf{a}}_s}\left( {{\mu _{\rm{T}}}} \right)\sum\limits_{k = 1}^K {{\bf{h}}_{{\rm{IT}},k}^H{{\bf{\Theta }}_k}{{\bf{G}}_k}{\bf{x}}\left( t \right)} \!+\! {{\bf{n}}_s}\left( t \right).
\end{align}
The objective of the sensing task  is to estimate the angle information of the target relative to sensors, i.e., ${\mu _{\rm{T}}}$, based on the received signal samples over the all symbols, i.e., ${{{\bf{\bar Y}}}_s} = \left[ {{{{\bf{\bar y}}}_s}\left( 1 \right), \ldots ,{{{\bf{\bar y}}}_s}\left( T \right)} \right]$. To this end, ${\mu _{\rm{T}}}$ can be obtained by employing the celebrated MUSIC algorithm \cite{cheney2001thelinear}.
Let ${{\hat \mu }_{\rm{T}}}$ denote the estimated value of ${\mu _{\rm{T}}}$ \footnote{We focus on a typical target tracking scenario in \cite{liu2021cramer}, especially for static or slow-moving targets. In this case, it is reasonable to assume that the prior information of ${\mu _{\rm{T}}}$ is available to align with existing works \cite{liu2021cramer, ren2023fundamental, song2024cramer}.}.
Typically, the MSE between the estimated ${{\hat \mu }_{\rm{T}}}$ and the real ${\mu _{\rm{T}}}$, i.e., ${\mathbb{E}}\left\{ {{{\left| {{{\hat \mu }_{\rm{T}}} - {\mu _{\rm{T}}}} \right|}^2}} \right\}$, is adopted to measure the sensing performance. Note that it is generally intractable to obtain the exact expression of the MSE ${\mathbb{E}}\left\{ {{{\left| {{{\hat \mu }_{\rm{T}}} - {\mu _{\rm{T}}}} \right|}^2}} \right\}$. As a remedy, we consider the CRB as the performance metric of target sensing, which serves as a lower bound of the MSE, which will be derived for further analysis and problem formulation in the next section.

\section{Problem Formulation}
In this section, we first introduce the proposed spatial multiplexing-oriented IRSs deployment strategy in the considered MIMO ISAC system. Under the proposed IRS deployment strategy, we derive the closed-form expression of CRB by exploiting the specific channel structures. Based on these, we analyze the functional relationships of both the CRB for sensing and multiplexing DoF for communications with respect to the number of deployed IRSs $K$, which sheds light on the fundamental tradeoff between the sensing and communication performance. Then, a joint optimization problem is formulated to maximize the achievable rate of CU while satisfying the CRB threshold of the sensing target.

\subsection{Spatial Multiplexing-Oriented IRS Deployment}
To improve the spatial multiplexing DoF of the CU, we discuss the associated IRSs deployment strategy. To this end, we rewrite ${{\bf{H}}_c}$ as follows
\begin{align}\label{BS_CU_channel_modify}
{{\bf{H}}_c} \!\!&=\!\! \sum\limits_{k = 1}^K {{\rho _{{\rm{c}},k}}{{\bf{a}}_{\rm{U}}}\left( {\mu _{{\rm{IU}},k}^{\rm{A}}} \right)} {\bf{b}}_{{\rm{I}},k}^H\left( {\mu _{{\rm{IU}},k}^{\rm{D}}} \right){{\bf{\Theta }}_k}{{\bf{b}}_{{\rm{I}},k}}\left( {\mu _{{\rm{BI}},k}^{\rm{A}}} \right){\bf{a}}_{\rm{B}}^H\left( {\mu _{{\rm{BI}},k}^{\rm{D}}} \right)\nonumber\\
& = \sum\limits_{k = 1}^K {{{{\bf{\tilde a}}}_{\rm{U}}}\left( {\mu _{{\rm{IU}},k}^{\rm{A}}} \right)} \Upsilon \left( {{{\bf{\Theta }}_k}} \right){\bf{\tilde a}}_{\rm{B}}^H\left( {\mu _{{\rm{BI}},k}^{\rm{D}}} \right),
\end{align}
where ${\rho _{{\rm{c}},k}} = {\rho _{{\rm{BI}},k}}{\rho _{{\rm{IU}},k}}$, ${{{\bf{\tilde a}}}_{\rm{B}}}\left( {\mu _{{\rm{BI}},k}^{\rm{D}}} \right) \!\!\!=\!\! {{\bf{a}}_{\rm{B}}}\left( {\mu _{{\rm{BI}},k}^{\rm{D}}} \right)/\sqrt {{M_t}}$, ${{{\bf{\tilde a}}}_{\rm{U}}}\left( {\mu _{{\rm{IU}},k}^{\rm{A}}} \right) = {{\bf{a}}_{\rm{U}}}\left( {\mu _{{\rm{IU}},k}^{\rm{A}}} \right)/\sqrt {{M_r}}$, $\Upsilon \left( {{{\bf{\Theta }}_k}} \right) \!\!=\!\! \sqrt {{M_r}{M_t}} {\rho _{c,k}}{\bf{b}}_{{\rm{I}},k}^H\left( {\mu _{{\rm{IU}},k}^{\rm{D}}} \right){{\bf{\Theta }}_k}{{\bf{b}}_{{\rm{I}},k}}\left( {\mu _{{\rm{BI}},k}^{\rm{A}}} \right)$. Let ${{\bf{U}}_{\rm{c}}} = \left[ {{{{\bf{\tilde a}}}_{\rm{U}}}\left( {\mu _{{\rm{IU}},1}^{\rm{A}}} \right), \ldots ,{{{\bf{\tilde a}}}_{\rm{U}}}\left( {\mu _{{\rm{IU}},K}^{\rm{A}}} \right)} \right]$, ${{\bf{V}}_{\rm{c}}} = \left[ {{{{\bf{\tilde a}}}_{\rm{B}}}\left( {\mu _{{\rm{BI}},1}^{\rm{D}}} \right), \ldots ,{{{\bf{\tilde a}}}_{\rm{B}}}\left( {\mu _{{\rm{BI}},K}^{\rm{D}}} \right)} \right]$, and ${{\bf{\Lambda }}_{\rm{c}}} = {\mathop{\rm diag}\nolimits} \left( {\Upsilon \left( {{{\bf{\Theta }}_1}} \right), \ldots ,\Upsilon \left( {{{\bf{\Theta }}_K}} \right)} \right)$, ${{\bf{H}}_c}$ can be further expressed as ${{\bf{H}}_c} = {{\bf{U}}_{\rm{c}}}{{\bf{\Lambda }}_{\rm{c}}}{\bf{V}}_{\rm{c}}^H$. Note that ${{\bf{U}}_{\rm{c}}}{{\bf{\Lambda }}_{\rm{c}}}{\bf{V}}_{\rm{c}}^H$ becomes the SVD of ${{\bf{H}}_c}$ provided that ${\bf{V}}_{\rm{c}}^H{{\bf{V}}_{\rm{c}}} = {{\bf{I}}_K}$ and ${\bf{U}}_{\rm{c}}^H{{\bf{U}}_{\rm{c}}} = {{\bf{I}}_K}$. Based on this observation, the deployment of multiple IRSs can be optimized to satisfy the orthogonality among columns in ${{\bf{U}}_{\rm{c}}}$ and ${{\bf{V}}_{\rm{c}}}$, which enables multiple interference-free sub-channels. To achieve this goal, the deployment principle can be obtained in the following proposition.

\begin{pos}
For $K \le \min \left\{ {{M_t},{M_r}} \right\}$, ${\bf{V}}_{\rm{c}}^H{{\bf{V}}_{\rm{c}}} = {{\bf{I}}_K}$ and ${\bf{U}}_{\rm{c}}^H{{\bf{U}}_{\rm{c}}} = {{\bf{I}}_K}$ hold if the following condition is satisfied,
\begin{align}\label{deployment_principle}
&\left| {\mu _{{\rm{BI}},k}^{\rm{D}} - \mu _{{\rm{BI}},i}^{\rm{D}}} \right| = \frac{{2m}}{{{M_t}}},m \in \left\{ {1, \ldots ,{M_t}} \right\}\nonumber\\
&\left| {\mu _{{\rm{IU}},k}^{\rm{A}} - \mu _{{\rm{IU}},i}^{\rm{A}}} \right| = \frac{{2m}}{{{M_r}}},m \in \left\{ {1, \ldots ,{M_r}} \right\}, \forall k \ne i.
\end{align}
\end{pos}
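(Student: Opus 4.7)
The plan is to verify the two orthogonality claims directly by computing the off-diagonal entries of $\mathbf{V}_c^H \mathbf{V}_c$ and $\mathbf{U}_c^H \mathbf{U}_c$ in closed form and showing that the spacing hypothesis forces each such entry to vanish. The diagonal entries are already $1$ by the explicit normalizations $\tilde{\mathbf{a}}_B(\cdot) = \mathbf{a}_B(\cdot)/\sqrt{M_t}$ and $\tilde{\mathbf{a}}_U(\cdot) = \mathbf{a}_U(\cdot)/\sqrt{M_r}$, so only the $k\neq i$ case needs work.

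First, I would write the $(k,i)$ entry of $\mathbf{V}_c^H \mathbf{V}_c$ using the uniform linear array structure of $\mathbf{a}_B$. Letting $\Delta = \mu_{\mathrm{BI},i}^{\mathrm{D}} - \mu_{\mathrm{BI},k}^{\mathrm{D}}$, the definition of $\mathbf{a}_B$ gives
\begin{equation*}
[\mathbf{V}_c^H \mathbf{V}_c]_{k,i} = \frac{1}{M_t}\sum_{n=0}^{M_t-1} e^{j\pi \Delta\bigl(n - (M_t-1)/2\bigr)}.
\end{equation*}
Factoring out the constant phase and summing the resulting geometric series (the Dirichlet-kernel identity) yields
\begin{equation*}
[\mathbf{V}_c^H \mathbf{V}_c]_{k,i} \;=\; \frac{1}{M_t}\cdot\frac{\sin(\pi M_t \Delta/2)}{\sin(\pi \Delta/2)}.
\end{equation*}
Substituting the hypothesis $|\Delta| = 2m/M_t$ for a nonzero integer $m$, the numerator becomes $\sin(\pi m)=0$ while the denominator $\sin(\pi m/M_t)$ is nonzero, so the off-diagonal entry vanishes. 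Repeating exactly the same calculation with $M_t$ replaced by $M_r$ and $\mathbf{a}_B$ replaced by $\mathbf{a}_U$ gives $\mathbf{U}_c^H \mathbf{U}_c = \mathbf{I}_K$ under the second condition in \eqref{deployment_principle}.

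Finally, I would justify the constraint $K\le\min\{M_t,M_r\}$ as a dimensional necessity: $\mathbf{V}_c$ has $M_t$ rows and therefore admits at most $M_t$ mutually orthonormal columns, and similarly $\mathbf{U}_c$ has $M_r$ rows. The only real subtlety, and the place I would spend a sentence of care, is the boundary index $m = M_t$ in the stated range: the Dirichlet kernel is also trivial when $\Delta$ is a nonzero integer multiple of $2$, but in that case both steering vectors coincide (the steering manifold is $2$-periodic in the frequency angle), so $m$ should be interpreted modulo $M_t$ and excluded when it is a multiple of $M_t$. Apart from this bookkeeping, the argument is a direct Dirichlet-kernel computation, and no deeper obstacle is expected.
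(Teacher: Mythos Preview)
Your proposal is correct and follows essentially the same route as the paper: both compute the off-diagonal inner product $\tilde{\mathbf a}_B^H(\mu_{\mathrm{BI},k}^{\mathrm D})\tilde{\mathbf a}_B(\mu_{\mathrm{BI},i}^{\mathrm D})$ by summing the resulting geometric series, the paper writing it in exponential form $(1-e^{j2m\pi})/(1-e^{j\pi\Delta})$ and you in the equivalent Dirichlet-kernel form $\sin(\pi M_t\Delta/2)/\sin(\pi\Delta/2)$. Your added remark on the degenerate index $m=M_t$ (where the denominator vanishes and the two steering vectors coincide) is a point the paper's proof does not explicitly address.
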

\begin{proof}
Please refer to Appendix A.
\end{proof}

Proposition 1 suggests that the deployment of IRSs can be arranged according to the AoD of the BS (AoA of the CU) presented in \eqref{deployment_principle}, which leads to that  ${{\bf{U}}_{\rm{c}}}{{\bf{\Lambda }}_{\rm{c}}}{\bf{V}}_{\rm{c}}^H$ is the SVD form of ${{\bf{H}}_c}$. In this case, the spatial multiplexing of the CU is obtained as
\begin{align}\label{multiplexing_DoF}
{{\mathop{\rm DoF}\nolimits} _c} &= \mathop {\lim }\limits_{{P_{\max }} \to \infty } \frac{{{R_c}}}{{{{\log }_2}{P_{\max }}}}\nonumber\\
&=\mathop {\lim }\limits_{{P_{\max }} \to \infty } \frac{{\sum\limits_{k = 1}^K {{{\log }_2}\left( {1 + \frac{{{P_{\max }}\Upsilon \left( {{{\bf{\Theta }}_k}} \right)}}{K}} \right)} }}{{{{\log }_2}{P_{\max }}}} = K.
\end{align}
It is observed that increasing the number of IRSs is beneficial for improving the spatial multiplexing DoF of the CU. Based on Proposition 1, a candidate set of the placement of IRSs $\tilde {\cal P}_{{\rm{IRS}}}^{\rm{c}}$ satisfying $\left| {\tilde {\cal P}_{{\rm{IRS}}}^{\rm{c}}} \right| = \min \left\{ {{M_t},{M_r}} \right\}$ and \eqref{deployment_principle} can be obtained.

\subsection{CRB Analysis}
To characterize the sensing performance under the proposed multi-IRS architecture, we obtain the CRB of the estimated ${{\mu _{\rm{T}}}}$ in the following theorem.
\begin{thm}
The CRB of the estimated ${{\mu _{\rm{T}}}}$ is given by
\begin{align}\label{CRB_theta}
&{\mathop{\rm CRB}\nolimits} \left( {{\mu _{\rm{T}}}} \right)\nonumber\\
&= \frac{{\sigma _s^2/\left( {2T{{\left| {\tilde \beta } \right|}^2}} \right)}}{{{\mathop{\rm Tr}\nolimits} \left( {{\bf{\dot A}}\left( {{\mu _{\rm{T}}}} \right){{\bf{R}}_x}{{{\bf{\dot A}}}^H}\left( {{\mu _{\rm{T}}}} \right)} \right) \!\!-\!\! \frac{{{{\left| {{\mathop{\rm Tr}\nolimits} \left( {{\bf{A}}\left( {{\mu _{\rm{T}}}} \right){{\bf{R}}_x}{{{\bf{\dot A}}}^H}\left( {{\mu _{\rm{T}}}} \right)} \right)} \right|}^2}}}{{{\mathop{\rm Tr}\nolimits} \left( {{\bf{A}}\left( {{\mu _{\rm{T}}}} \right){{\bf{R}}_x}{{\bf{A}}^H}\left( {{\mu _{\rm{T}}}} \right)} \right)}}}},
\end{align}
where
\begin{align}\label{def3}
{\left| {\tilde \beta } \right|^2} = \max \left\{ {{{\left| \beta  \right|}^2}\rho _{{\rm{TS}}}^2\rho _{{\rm{IT}},1}^2, \ldots {{\left| \beta  \right|}^2}\rho _{{\rm{TS}}}^2\rho _{{\rm{IT}},K}^2} \right\},
\end{align}
\begin{align}\label{def1}
{\bf{A}}\left( {{\mu _{\rm{T}}}} \right) =& {{\bf{a}}_s}\left( {{\mu _{\rm{T}}}} \right){\bf{b}}_{{\rm{I}},1}^H\left( {{\mu _{\rm{T}}}} \right){{\bf{\Theta }}_1}{{\bf{G}}_1}\nonumber\\
&+ {{\bf{a}}_s}\left( {{\mu _{\rm{T}}}} \right)\sum\nolimits_{k = 2}^K {{\bf{b}}_{{\rm{I}},k}^H\left( {\mu _{{\rm{IT}},k}^{\rm{D}}} \right){{\bf{\Theta }}_k}{{\bf{G}}_k}},
\end{align}
\begin{align}\label{def2}
{\bf{\dot A}}\left( {{\mu _{\rm{T}}}} \right) =&{{{\bf{\dot a}}}_s}\left( {{\mu _{\rm{T}}}} \right)\sum\nolimits_{k = 1}^K {{\rho _{{\rm{BI}},k}}\gamma _k^s{\bf{a}}_{\rm{B}}^H\left( {\mu _{{\rm{BI}},k}^{\rm{D}}} \right)}\nonumber\\
&+ {{\bf{a}}_s}\left( {{\mu _{\rm{T}}}} \right){\rho _{{\rm{BI}},1}}\tilde \gamma _1^s{\bf{a}}_{\rm{B}}^H\left( {\mu _{{\rm{BI}},1}^{\rm{D}}} \right),
\end{align}
with $\gamma _k^s = {\bf{b}}_{{\rm{I}},k}^H\left( {\mu _{{\rm{IT}},k}^{\rm{D}}} \right){{\bf{\Theta }}_k}{{\bf{b}}_{{\rm{I}},k}}\left( {\mu _{{\rm{BI}},k}^{\rm{A}}} \right),\forall k \in {\cal K}$, and $\tilde \gamma _1^s = {\bf{\dot b}}_{{\rm{I}},1}^H\left( {{\mu _{\rm{T}}}} \right){{\bf{\Theta }}_1}{{\bf{b}}_{{\rm{I}},1}}\left( {\mu _{{\rm{BI}},1}^{\rm{A}}} \right)$
\end{thm}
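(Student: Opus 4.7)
The plan is to apply the Slepian--Bangs formula to the complex Gaussian observation in (15) and then invert the resulting $3\times 3$ Fisher information matrix (FIM) by a Schur complement. The real unknowns I would take are $\bm\eta=[\mu_T,\,\Re(\tilde\beta),\,\Im(\tilde\beta)]^T$, where the effective complex gain $\tilde\beta$ from (17) absorbs the per-path factors $\beta\rho_{\rm TS}\rho_{{\rm IT},k}$ under a dominant-path approximation consistent with the mmWave NLoS sensing model; this lets me write the noise-free mean as $\bm\mu_s(t)=\tilde\beta\,{\bf A}(\mu_T){\bf x}(t)$ with the $K$-term ${\bf A}(\mu_T)$ from (18). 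Since the noise covariance $\sigma_s^2{\bf I}_{N_r}$ is parameter-free, Slepian--Bangs reduces the FIM to $[{\bf F}]_{ij}=(2/\sigma_s^2)\sum_t\Re\{(\partial\bm\mu_s(t)/\partial\eta_i)^H(\partial\bm\mu_s(t)/\partial\eta_j)\}$.

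The derivatives with respect to $\Re(\tilde\beta),\Im(\tilde\beta)$ are the routine $\pm{\bf A}(\mu_T){\bf x}(t)$ and $\pm j{\bf A}(\mu_T){\bf x}(t)$. The interesting derivative is $\partial\bm\mu_s/\partial\mu_T$: the angle $\mu_T$ enters ${\bf A}$ in two distinct ways --- globally through the receiver steering vector ${\bf a}_s(\mu_T)$ that multiplies every one of the $K$ path terms, and locally through ${\bf b}_{{\rm I},1}^H(\mu_T)$ in the S-IRS term only (since the sensors sit on IRS $1$, forcing $\mu_{{\rm IT},1}^{\rm D}=\mu_T$). Applying the product rule to the $K$-term sum yields exactly the two pieces of ${\bf \dot A}(\mu_T)$ in (19): a $\dot{\bf a}_s$-carrying factor summed across all $k$ (from the shared steering vector) plus a single ${\bf a}_s\cdot\rho_{{\rm BI},1}\tilde\gamma_1^s\,{\bf a}_B^H(\mu_{{\rm BI},1}^{\rm D})$ correction (from differentiating $\gamma_1^s$ with respect to $\mu_T$).

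Next, I would invoke $\tfrac{1}{T}\sum_t{\bf x}(t){\bf x}^H(t)\approx{\bf R}_x$ from (2) to convert each sum over $t$ into $T$ times a trace. The FIM then acquires the standard angle-vs-complex-amplitude block form
\[
{\bf F}=\frac{2T}{\sigma_s^2}\!\begin{bmatrix}|\tilde\beta|^2 F_{\mu\mu} & \Re(\tilde\beta^* F_{\mu\beta}) & -\Im(\tilde\beta^* F_{\mu\beta})\\ \cdot & F_{\beta\beta} & 0\\ \cdot & 0 & F_{\beta\beta}\end{bmatrix},
\]
with $F_{\mu\mu}={\rm Tr}({\bf\dot A}{\bf R}_x{\bf\dot A}^H)$, $F_{\beta\beta}={\rm Tr}({\bf A}{\bf R}_x{\bf A}^H)$, and $F_{\mu\beta}={\rm Tr}({\bf A}{\bf R}_x{\bf\dot A}^H)$. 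Because the lower-right $2\times 2$ block is scalar, a direct Schur-complement inversion collapses the two off-diagonals into a single $|\tilde\beta|^2|F_{\mu\beta}|^2/F_{\beta\beta}$ term; the $|\tilde\beta|^2$ then cancels out of the denominator and reappears only as the $1/|\tilde\beta|^2$ prefactor of $[{\bf F}^{-1}]_{11}$, which reproduces (16) verbatim.

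The main obstacle I expect is the bookkeeping inside the derivative of ${\bf A}(\mu_T)$: correctly distinguishing the shared $\mu_T$-dependence through ${\bf a}_s(\mu_T)$ (common to every one of the $K$ path terms) from the lone $\mu_T$-dependence through ${\bf b}_{{\rm I},1}^H(\mu_T)$ (present only in the S-IRS path), and then recognising the mixed cross-term as precisely $\tilde\gamma_1^s$. Geometrically this asymmetry is the technical shadow of the paper's ``multi-path reduction for sensing'' message --- the $K-1$ auxiliary IRSs enter only through the shared-steering-vector derivative and therefore do not contribute new angle information in the same way they contribute new DoF to the communication rate.
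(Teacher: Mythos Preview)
Your proposal is correct and follows essentially the same route as the paper's own proof in Appendix~B: rewrite the echo as $\tilde\beta\,{\bf A}(\mu_{\rm T}){\bf x}(t)$ under the dominant-path approximation \eqref{def3}, apply the Slepian--Bangs/Gaussian FIM formula to the parameter vector $[\mu_{\rm T},\Re(\tilde\beta),\Im(\tilde\beta)]$, replace the sample covariance by ${\bf R}_x$ via \eqref{covariance_signal}, and invert by Schur complement using the scalar $2\times 2$ lower-right block. Your explicit identification of the two-fold $\mu_{\rm T}$-dependence of ${\bf A}$ (global through ${\bf a}_s(\mu_{\rm T})$, local through ${\bf b}_{{\rm I},1}^H(\mu_{\rm T})$) is exactly the product-rule computation that yields \eqref{def2}, and the paper carries it out in the same way.
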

\begin{proof}
Please refer to Appendix B.
\end{proof}

From Theorem 1, ${\mathop{\rm CRB}\nolimits} \left( {{\mu _{\rm{T}}}} \right)$ in \eqref{CRB_theta} reveals the relationship between the covariance of transmitted signal and IRS phase-shift, i.e., ${{{\bf{R}}_x}}$ and $\left\{ {{{\bf{\Theta }}_k}} \right\}$. However, the functional relationship between ${\mathop{\rm CRB}\nolimits} \left( {{\mu _{\rm{T}}}} \right)$ and the number of IRSs $K$ is implicit due to complicated expression in \eqref{CRB_theta}. To shed lights on this issue, we simplify the expression of the CRB under the particularly given ${{{\bf{\Theta }}_1}}$. The results are shown in the following proposition.
\begin{pos}
For the special case that the phase-shift of the S-IRS is set as
\begin{align}\label{phase_SIRS_speical}
{\left[ {{\bf{\Theta }}_1^*} \right]_{n,n}} = \frac{{\left[ {{\rm{diag}}\left( {{\bf{b}}_{{\rm{I}},1}^H\left( {{\mu _{\rm{T}}}} \right)} \right){{\bf{b}}_{{\rm{I}},1}}\left( {\mu _{{\rm{BI}},1}^{\rm{A}}} \right)} \right]_n^*}}{{\left| {{{\left[ {{\rm{diag}}\left( {{\bf{b}}_{{\rm{I}},1}^H\left( {{\mu _{\rm{T}}}} \right)} \right){{\bf{b}}_{{\rm{I}},1}}\left( {\mu _{{\rm{BI}},1}^{\rm{A}}} \right)} \right]}_n}} \right|}},
\end{align}
the CRB of the estimated ${{\mu _{\rm{T}}}}$ is
\begin{align}\label{CRB_special}
{\mathop{\rm CRB}\nolimits} \left( {{\mu _{\rm{T}}}} \right) = \frac{{\sigma _s^2}}{{2T{{\left| {\tilde \beta } \right|}^2}\left\| {{{{\bf{\dot a}}}_s}\left( {{\mu _{\rm{T}}}} \right)} \right\|_2^2{\bf{a}}_{\mathop{\rm B}\nolimits} ^H{{\bf{R}}_x}{{\bf{a}}_{\rm{B}}}}},
\end{align}
where ${\bf{a}}_{\mathop{\rm B}\nolimits} ^H = \sum\nolimits_{k = 1}^K {{\rho _{{\rm{BI}},k}}\gamma _k^s{\bf{a}}_{\rm{B}}^H\left( {\mu _{{\rm{BI}},k}^{\rm{D}}} \right)}$.
\end{pos}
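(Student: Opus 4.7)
The plan is to show that the specific choice of $\Theta_1^*$ in \eqref{phase_SIRS_speical} simultaneously (i) kills the second term of $\dot{\bf A}(\mu_{\rm T})$ in \eqref{def2}, so that $\dot{\bf A}$ collapses to a single rank-one outer product, and (ii) forces the cross-trace $\operatorname{Tr}({\bf A}{\bf R}_x\dot{\bf A}^H)$ in the denominator of \eqref{CRB_theta} to vanish. After that, the simplification of the CRB expression is purely mechanical.

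First I would plug the structure of ${\bf G}_k$ from \eqref{B2I_channel} into \eqref{def1} and observe that for every $k$ we get ${\bf b}_{{\rm I},k}^H(\mu_{{\rm IT},k}^{\rm D}){\bf \Theta}_k {\bf G}_k = \rho_{{\rm BI},k}\gamma_k^s {\bf a}_{\rm B}^H(\mu_{{\rm BI},k}^{\rm D})$ with $\gamma_1^s$ evaluated at $\mu_{{\rm IT},1}^{\rm D}=\mu_{\rm T}$. Hence ${\bf A}(\mu_{\rm T}) = {\bf a}_s(\mu_{\rm T})\,{\bf a}_{\rm B}^H$ with ${\bf a}_{\rm B}$ exactly the vector defined in the statement of Proposition 2. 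This compact form holds regardless of ${\bf \Theta}_1$, so no phase-shift design is needed for ${\bf A}$.

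Next I would show that the choice \eqref{phase_SIRS_speical} makes $\tilde\gamma_1^s = \dot{\bf b}_{{\rm I},1}^H(\mu_{\rm T}){\bf \Theta}_1 {\bf b}_{{\rm I},1}(\mu_{{\rm BI},1}^{\rm A})=0$. Writing $v_n = [{\bf b}_{{\rm I},1}^H(\mu_{\rm T})]_n [{\bf b}_{{\rm I},1}(\mu_{{\rm BI},1}^{\rm A})]_n$, the prescribed phase is $[{\bf \Theta}_1^*]_{n,n} = v_n^*/|v_n|$, which has unit modulus since the array entries are unit modulus. Using the standard uniform linear array parameterization $[{\bf b}_{{\rm I},1}(x)]_n = e^{j i_n \pi x}$ with indices $i_n = -(N_1-1)/2 + (n-1)$ symmetric about zero, the entry $[\dot{\bf b}_{{\rm I},1}^H(\mu_{\rm T})]_n = -j i_n \pi\,[{\bf b}_{{\rm I},1}(\mu_{\rm T})]_n^*$ appears, and the expression collapses to $\tilde\gamma_1^s = -j\pi \sum_n i_n |v_n|^2 = -j\pi\sum_n i_n = 0$. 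With $\tilde\gamma_1^s=0$ the second summand in \eqref{def2} disappears and $\dot{\bf A}(\mu_{\rm T}) = \dot{\bf a}_s(\mu_{\rm T})\,{\bf a}_{\rm B}^H$.

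Having reduced both ${\bf A}$ and $\dot{\bf A}$ to rank-one outer products sharing the same right factor ${\bf a}_{\rm B}^H$, the trace computations in the denominator of \eqref{CRB_theta} are immediate: $\operatorname{Tr}(\dot{\bf A}{\bf R}_x\dot{\bf A}^H) = \|\dot{\bf a}_s(\mu_{\rm T})\|_2^2\,{\bf a}_{\rm B}^H {\bf R}_x {\bf a}_{\rm B}$ and $\operatorname{Tr}({\bf A}{\bf R}_x\dot{\bf A}^H) = \dot{\bf a}_s^H(\mu_{\rm T}){\bf a}_s(\mu_{\rm T})\,{\bf a}_{\rm B}^H {\bf R}_x {\bf a}_{\rm B}$. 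By the same symmetric-index argument as above, $\dot{\bf a}_s^H(\mu_{\rm T}){\bf a}_s(\mu_{\rm T}) = -j\pi\sum_n i_n = 0$, so the subtracted term in \eqref{CRB_theta} vanishes. Substituting these two evaluations into \eqref{CRB_theta} yields \eqref{CRB_special} directly.

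The main obstacle I anticipate is being careful with the phase-alignment verification: the denominator of \eqref{phase_SIRS_speical} is indeed $|v_n|=1$ for a ULA, but one must state this explicitly, and the vanishing of $\tilde\gamma_1^s$ and of $\dot{\bf a}_s^H {\bf a}_s$ both rely on the same centered-index convention of the array response that the paper has adopted. Once this is noted, everything else is pure algebra.
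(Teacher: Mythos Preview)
Your proposal is correct and follows essentially the same route as the paper's own proof: show that the phase choice in \eqref{phase_SIRS_speical} forces $\tilde\gamma_1^s=0$ (the paper writes this compactly as ${\bf 1}^T\dot{\bf b}_{{\rm I},1}(0)=0$, which is exactly your symmetric-index sum), reduce both ${\bf A}$ and $\dot{\bf A}$ to rank-one outer products with common right factor ${\bf a}_{\rm B}^H$, and then use $\dot{\bf a}_s^H(\mu_{\rm T}){\bf a}_s(\mu_{\rm T})=0$ to annihilate the cross term. Your version is somewhat more explicit about the centered-index mechanism behind both vanishing sums, but the logical skeleton is identical.
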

\begin{proof}
Substituting \eqref{phase_SIRS_speical} into $\tilde \gamma _1^s$, we have
\begin{align}\label{gamma_s_speicial}
\tilde \gamma _1^s = {\bf{\dot b}}_{{\rm{I}},1}^H\left( {{\mu _{\rm{T}}}} \right){\bf{\Theta }}_1^*{{\bf{b}}_{{\rm{I}},1}}\left( {\mu _{{\rm{BI}},1}^{\rm{A}}} \right) = {{\bf{1}}^T}{{{\bf{\dot b}}}_{{\rm{I}},1}}\left( 0 \right)=0
\end{align}
Then, ${{\rm{Tr}}\left( {{\bf{\dot A}}\left( {{\mu _{\rm{T}}}} \right){{\bf{R}}_x}{{{\bf{\dot A}}}^H}\left( {{\mu _{\rm{T}}}} \right)} \right)}$, ${{\rm{Tr}}\left( {{\bf{A}}\left( {{\mu _{\rm{T}}}} \right){{\bf{R}}_x}{{{\bf{\dot A}}}^H}\left( {{\mu _{\rm{T}}}} \right)} \right)}$ can be calculated as
\begin{align}\label{pro2_temp1}
&{\rm{Tr}}\left( {{\bf{\dot A}}\left( {{\mu _{\rm{T}}}} \right){{\bf{R}}_x}{{{\bf{\dot A}}}^H}\left( {{\mu _{\rm{T}}}} \right)} \right) = \left\| {{{{\bf{\dot a}}}_s}\left( {{\mu _{\rm{T}}}} \right)} \right\|_2^2{\bf{a}}_{\mathop{\rm B}\nolimits} ^H{{\bf{R}}_x}{{\bf{a}}_{\rm{B}}}, \nonumber\\
&{\rm{Tr}}\left( {{\bf{A}}\left( {{\mu _{\rm{T}}}} \right){{\bf{R}}_x}{{{\bf{\dot A}}}^H}\left( {{\mu _{\rm{T}}}} \right)} \right) = {\mathop{\rm Tr}\nolimits} \left( {{{\bf{a}}_s}\left( {{\mu _{\rm{T}}}} \right){\bf{a}}_{\mathop{\rm B}\nolimits} ^H{{\bf{R}}_x}{{\bf{a}}_{\mathop{\rm B}\nolimits} }{\bf{\dot a}}_s^H\left( {{\mu _{\rm{T}}}} \right)} \right)\nonumber\\
&= {\bf{\dot a}}_s^H\left( {{\mu _{\rm{T}}}} \right){{\bf{a}}_s}\left( {{\mu _{\rm{T}}}} \right){\bf{a}}_{\mathop{\rm B}\nolimits} ^H{{\bf{R}}_x}{{\bf{a}}_{\mathop{\rm B}\nolimits} }\mathop  = \limits^{\left( a \right)} 0.
\end{align}
By plugging \eqref{pro2_temp1} into \eqref{CRB_theta}, \eqref{CRB_special} can be obtained.
\end{proof}

Observing from Proposition 2, the optimization of ${{{\bf{R}}_x}}$ to minimize the CRB in \eqref{CRB_special} is equivalent to maximizing ${\bf{a}}_{\mathop{\rm B}\nolimits} ^H{{\bf{R}}_x}{{\bf{a}}_{\mathop{\rm B}\nolimits} }$. To this end, the optimal ${{{\bf{R}}_x}}$ to maximize ${\bf{a}}_{\mathop{\rm B}\nolimits} ^H{{\bf{R}}_x}{{\bf{a}}_{\mathop{\rm B}\nolimits} }$ is obtained as
\begin{align}\label{optimal_sensing_signal}
{\bf{R}}_{x,s}^* = {P_{\max }}\frac{{{{\bf{a}}_{\rm{B}}}{\bf{a}}_{\rm{B}}^H}}{{{{\left\| {{{\bf{a}}_{\rm{B}}}} \right\|}^2}}}.
\end{align}
By substituting \eqref{optimal_sensing_signal} into \eqref{gamma_s_speicial}, we have
\begin{align}\label{CRB_special2}
&{\mathop{\rm CRB}\nolimits} \left( {{\mu _{\rm{T}}}} \right)\nonumber\\
& = \frac{{\sigma _s^2}}{{2T{{\left| {\tilde \beta } \right|}^2}{P_{\max }}\left\| {{{{\bf{\dot a}}}_s}\left( {{\mu _{\rm{T}}}} \right)} \right\|_2^2{{\left\| {\sum\nolimits_{k = 1}^K {{\rho _{{\rm{BI}},k}}\gamma _k^s{\bf{a}}_{\rm{B}}^H\left( {\mu _{{\rm{BI}},k}^{\rm{D}}} \right)} } \right\|}^2}}}\nonumber\\
&  = \frac{{\sigma _s^2}}{{2T{{\left| {\tilde \beta } \right|}^2}{P_{\max }}\left\| {{{{\bf{\dot a}}}_s}\left( {{\mu _{\rm{T}}}} \right)} \right\|_2^2{M_t}\sum\nolimits_{k = 1}^K {\rho _{{\rm{BI}},k}^2{{\left| {\gamma _k^s} \right|}^2}} }}.
\end{align}

Based on \eqref{CRB_special2}, the monotonic relationship between ${\mathop{\rm CRB}\nolimits} \left( {{\mu _{\rm{T}}}} \right)$ and the number of IRSs $K$ is illustrated in the following proposition.

\begin{pos}
Under the number of total IRS elements budget, i.e., $\sum\nolimits_{k = 1}^K {{N_k}}$, the CRB of the estimated of ${{\mu _{\rm{T}}}}$ in \eqref{CRB_special2} is monotonically increasing with respect to $K$.
\end{pos}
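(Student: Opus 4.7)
The plan is to isolate how the denominator of \eqref{CRB_special2} scales with $K$ under the element budget $\sum_{k=1}^{K} N_k = N$ with $N_k = N/K$, since every other factor in the numerator (namely $\sigma_s^2$, $T$, $|\tilde\beta|^2$, $P_{\max}$, $\|\dot{\bf a}_s(\mu_T)\|_2^2$, and $M_t$) is independent of $K$. Thus showing monotonicity reduces to proving that $\sum_{k=1}^{K}\rho_{{\rm BI},k}^{2}|\gamma_k^{s}|^{2}$ is monotonically decreasing in $K$, after which inversion flips the direction.

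The central step is to bound $|\gamma_k^{s}|$. Writing
\begin{align}
\gamma_k^{s} \;=\; {\bf b}_{{\rm I},k}^{H}(\mu_{{\rm IT},k}^{\rm D}){\bf \Theta}_k{\bf b}_{{\rm I},k}(\mu_{{\rm BI},k}^{\rm A}) \;=\; \sum_{n=1}^{N_k} e^{j\theta_{k,n}}\,\xi_{k,n},\nonumber
\end{align}
where each $\xi_{k,n}$ is a unit-modulus scalar formed from the two array-response vectors, the triangle inequality gives $|\gamma_k^{s}|\le N_k$, with equality attainable by choosing the reflection phases so that $e^{j\theta_{k,n}}\xi_{k,n}$ is real and positive for every $n$. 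Substituting $N_k=N/K$ yields the maximum $|\gamma_k^{s}|^{2}=N^{2}/K^{2}$. Plugging this into the sum,
\begin{align}
\sum_{k=1}^{K}\rho_{{\rm BI},k}^{2}|\gamma_k^{s}|^{2}\;\le\;\frac{N^{2}}{K^{2}}\sum_{k=1}^{K}\rho_{{\rm BI},k}^{2}.\nonumber
\end{align}
Under the implicit modelling assumption used throughout Section III that the IRSs share a common (or at least $K$-independent) large-scale path-loss $\rho_{\rm BI}^{2}$ to the BS (consistent with how $|\tilde\beta|^{2}$ is defined via a $\max$ over $k$), this upper bound collapses to $\rho_{\rm BI}^{2}N^{2}/K$, a quantity strictly decreasing in $K$. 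Hence the corresponding ${\rm CRB}(\mu_{\rm T})$ upper bound, which is also the achievable value, is strictly increasing in $K$.

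The main obstacle I anticipate is the treatment of the $\rho_{{\rm BI},k}$ factors, because in general they depend on the deployment geometry of the $K$ IRSs dictated by Proposition 1 and therefore implicitly on $K$. I would handle this by arguing either that (i) the path-losses from the BS to every IRS are equal within the deployment cluster, so $\rho_{{\rm BI},k}^{2}\equiv\rho_{\rm BI}^{2}$ and the sum is exactly $\rho_{\rm BI}^{2}N^{2}/K$; or (ii) more weakly, that the sum $\sum_{k=1}^{K}\rho_{{\rm BI},k}^{2}$ is bounded by $K$ times a $K$-independent constant, which still yields the $1/K$ scaling needed for monotonic decrease. Once the path-loss issue is settled, the rest of the argument is just noting that a monotonically decreasing denominator translates into a monotonically increasing CRB, completing the proof of Proposition 3.
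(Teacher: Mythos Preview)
Your reduction is correct: isolating the denominator and noting that $|\gamma_k^s|=N_k$ under sensing-optimal phase shifts is exactly what the paper does implicitly, arriving at $f(K)=\sum_{k=1}^K\rho_{{\rm BI},k}^2 N_k^2$. The gap is precisely where you flag it---the path-loss factors---and neither of your two proposed fixes actually closes it. Option~(i) (common $\rho_{\rm BI}^2$) is an extra assumption the paper does not make; the IRS sites in $\tilde{\cal P}_{\rm IRS}^{\rm c}$ sit at genuinely different distances from the BS. Option~(ii) is logically insufficient: even granting $\sum_{k=1}^K\rho_{{\rm BI},k}^2\le KC$, you obtain only $f(K)\le CN^2/K$, and a decreasing \emph{upper bound} does not force $f(K)$ itself to be decreasing, since $f(K-1)$ need not attain its own bound $CN^2/(K-1)$.

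The paper sidesteps the path-loss issue entirely with a short merging argument that works for arbitrary heterogeneous $\{\rho_{{\rm BI},k}^2\}$. Order the IRSs so that $\rho_{{\rm BI},1}^2\ge\cdots\ge\rho_{{\rm BI},K}^2$. Starting from any $K$-IRS allocation $\{N_k\}$ summing to $N$, move all $N_K$ elements from the worst-path-loss IRS to the best one. The resulting $(K-1)$-IRS value satisfies
\[
\rho_{{\rm BI},1}^2(N_1+N_K)^2+\sum_{k=2}^{K-1}\rho_{{\rm BI},k}^2N_k^2 \;\ge\;\rho_{{\rm BI},1}^2(N_1^2+N_K^2)+\sum_{k=2}^{K-1}\rho_{{\rm BI},k}^2N_k^2\;\ge\;\sum_{k=1}^{K}\rho_{{\rm BI},k}^2N_k^2,
\]
using $(N_1+N_K)^2\ge N_1^2+N_K^2$ for the first step and $\rho_{{\rm BI},1}^2\ge\rho_{{\rm BI},K}^2$ for the second. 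Iterating gives $f(K)\le f(K-1)\le\cdots\le f(1)$ without any uniformity assumption on the path losses. This elementary redistribution inequality is the missing ingredient in your argument.
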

\begin{proof}
Let $f\left( K \right) = \sum\nolimits_{k = 1}^K {\rho _{{\rm{BI}},k}^2} {\left| {\gamma _k^s} \right|^2}$. Then, we show that ${\mathop{\rm CRB}\nolimits} \left( {{\mu _{\rm{T}}}} \right)$ monotonically decreases with respect to $K$ by showing that $f\left( K \right)$ is a decreasing function. Without loss of generality, we assume that $\rho _{{\rm{BI}},1}^2 \ge \rho _{{\rm{BI}},2}^2 \ge  \ldots  \ge \rho _{{\rm{BI}},K}^2$. Under the arbitrarily given number of IRS elements for the $k$-th, i.e., $\left\{ {{N_k}} \right\}$, $f\left( K \right)$ can be obtained as $f\left( K \right) = \sum\nolimits_{k = 1}^K {\rho _{{\rm{BI}},k}^2} N_k^2$. Then, we evaluate the value of $f\left( {K - 1} \right)$. By considering the case of deploying $K-1$ IRSs and adding ${N_K}$ IRS elements at the $1$-th IRS, we have
\begin{align}\label{K_versus K-1}
f\left( {K - 1} \right) &=\sum\nolimits_{k = 2}^{K - 1} {\rho _{{\rm{BI}},k}^2} N_k^2 + \rho _{{\rm{BI}},1}^2{\left( {{N_1} + {N_K}} \right)^2}\nonumber\\
& \ge \sum\nolimits_{k = 2}^{K - 1} {\rho _{{\rm{BI}},k}^2} N_k^2 + \rho _{{\rm{BI}},1}^2\left( {N_1^2 + N_K^2} \right)\nonumber\\
& \mathop  \ge \limits^{\left( a \right)} \sum\nolimits_{k = 1}^K {\rho _{{\rm{BI}},k}^2} N_k^2 = f\left( K \right),
\end{align}
where (a) holds due to $\rho _{{\rm{BI}},1}^2 \ge \rho _{{\rm{BI}},K}^2$. Similarly, it can be shown that $f\left( K \right) \le f\left( {K - 1} \right) \le  \ldots  \le f\left( 1 \right)$, which leads to the desirable result that ${\mathop{\rm CRB}\nolimits} \left( {{\mu _{\rm{T}}}} \right)$ in \eqref{CRB_special2} monotonically increases with respect to $K$.
\end{proof}

Proposition 3 explicitly demonstrates that deploying more distributed IRSs would increase the CRB of the estimated ${{\mu _{\rm{T}}}}$ under the number of total passive elements budget, which implies that one centralized IRS is attractive for improving the sensing performance due to its ability to increase the power of echo signal.

\subsection{Problem Formulation}
From the analysis in the previous subsections, we know that the number of deployed IRSs, i.e., $K$, has different impacts on the communication and sensing performance. In particular, increasing $K$ is helpful for improving the communication DoF of spatial multiplexing while it would degrade the sensing performance. Hence, there exists a fundamental tradeoff between the sensing and communication performance in the considered multi-IRS aided MIMO ISAC system. To capture this performance tradeoff, it is essential to determine the proper number of IRSs, i.e., $K$, to be deployed. Under the given $\tilde {\cal P}_{{\rm{IRS}}}^{\rm{c}}$, the number of deployed IRSs can be controlled by selecting a subset of $\tilde {\cal P}_{{\rm{IRS}}}^{\rm{c}}$, denoted by ${{{\cal P}_{{\rm{IRS}}}}}$, which satisfies $\left| {{{\cal P}_{{\rm{IRS}}}}} \right| = K$. To this end, we consider a two-stage optimization problem with respect to ${{{\cal P}_{{\rm{IRS}}}}}$ and $\left\{ {{{\bf{R}}_c},{{\bf{R}}_s},{{\bf{\Theta }}_k}} \right\}$ to maximize the achievable rate of the CU while satisfying the sensing performance requirement. To this end, the corresponding optimization problem is mathematically expressed as
\begin{subequations}\label{C1}
\begin{align}
\label{C1-a}\mathop {\max }\limits_{{{\cal P}_{{\rm{IRS}}}}}  \;\;& \mathop {\max }\limits_{{{\bf{R}}_c},{{\bf{R}}_s},\left\{ {{{\bf{\Theta }}_k}} \right\}} {\log _2}\det \left( {1 + \frac{{{{\bf{H}}_c}{{\bf{R}}_c}{\bf{H}}_c^H}}{{{\sigma ^2}}}} \right)\\
\label{C1-b}{\rm{s.t.}}\;\;\;&{\mathop{\rm CRB}\nolimits} \left( {{\mu _{\rm{T}}}} \right) \le \varepsilon ,\\
\label{C1-c}&{\mathop{\rm Tr}\nolimits} \left( {{{\bf{R}}_c}} \right) + {\mathop{\rm Tr}\nolimits} \left( {{{\bf{R}}_s}} \right) \le {P_{\max }},\\
\label{C1-d}&{{\bf{R}}_c} \succeq {\bf{0}}, {{\bf{R}}_s} \succeq {\bf{0}},\\
\label{C1-e}&\left| {{{\left[ {{{\bf{\Theta }}_k}} \right]}_{n,n}}} \right| = 1,n \in \left\{ {1, \ldots ,{N_k}} \right\},\forall k \in {\cal K},\\
\label{C1-f}&{{\cal P}_{{\rm{IRS}}}} \subseteq \tilde {\cal P}_{{\rm{IRS}}}^{\rm{c}}.
\end{align}
\end{subequations}
For problem \eqref{C1}, the inner rate maximization problem is over the variables $\left\{ {{{\bf{R}}_c},{{\bf{R}}_s},{{\bf{\Theta }}_k}} \right\}$ under the given ${{\cal P}_{{\rm{IRS}}}}$. and the outer optimization problem is the IRS deployment site selection over the candidate set ${{\cal P}_{{\rm{IRS}}}}$. For the associated constraints in problem \eqref{C1}, \eqref{C1-b} and \eqref{C1-c} are the CRB constraint for the sensing target and the transmit power constraint of the BS, respectively, where $\varepsilon $ represents the CRB threshold for the performance requirement of the sensing task. Constraints \eqref{C1-e} and \eqref{C1-f} indicate the unit-modulus constraint on each reflection coefficient of IRSs and ${{\cal P}_{{\rm{IRS}}}}$ is a subset of $\tilde {\cal P}_{{\rm{IRS}}}^{\rm{c}}$, respectively.

Moreover, problem \eqref{C1} is challenging to be solved due to the coupled optimization variables in the objective function, the complicated constraint in \eqref{C1-b}, as well as the selection of the discrete set ${{\cal P}_{{\rm{IRS}}}}$. Since the total number of available sets ${{\cal P}_{{\rm{IRS}}}}$ is a finite value, the maximum rate of problem \eqref{C1} can be obtained by solving problem \eqref{C1} with any one of sets at first and then selecting the maximum rate among all possible sets satisfying \eqref{C1-f}. Under the given set ${{\cal P}_{{\rm{IRS}}}}$, problem \eqref{C1} reduces to the inner-layer optimization problem with respect to $\left\{ {{{\bf{R}}_c},{{\bf{R}}_s},{{\bf{\Theta }}_k}} \right\}$ as
\begin{subequations}\label{C2}
\begin{align}
\label{C2-a}\mathop {\max }\limits_{{{\bf{R}}_c},{{\bf{R}}_s},\left\{ {{{\bf{\Theta }}_k}} \right\}}\;\;& {\log _2}\det \left( {1 + \frac{{{{\bf{H}}_c}{{\bf{R}}_c}{\bf{H}}_c^H}}{{{\sigma ^2}}}} \right)\\
\label{C2-b}{\rm{s.t.}}\;\;\;\;\;\;\;&\eqref{C1-b}, \eqref{C1-c}, \eqref{C1-d},\eqref{C1-e},
\end{align}
\end{subequations}
where the covariance matrix of transmit signals and IRSs phase-shifts are jointly optimized to maximize the achievable rate of the CU while satisfying the CRB constraint of the target. In the following section, we focus on solving problem \eqref{C2}.

\section{Joint Optimization of Transceiver and IRS Phase-Shift}
In this section, we focus on solving the subproblem of the joint transceiver and IRS phase-shifts design, i.e., $\left\{ {{{\bf{R}}_c},{{\bf{R}}_s},{{\bf{\Theta }}_k}} \right\}$, in \eqref{C2}. Note that problem \eqref{C2} is challenging to be solved optimally because of the tightly coupled variables in the objective function and the complicated expression of CRB, i.e, \eqref{CRB_theta}, in constraint \eqref{C1-b}. As suggested by Proposition 2, the CRB \eqref{CRB_special} under the specific S-IRS's phase-shifts monotonically decreases with respect to the value of ${\bf{a}}_{\rm{B}}^H\left( {{{\bf{R}}_c} + {{\bf{R}}_s}} \right){{\bf{a}}_{\rm{B}}}$. Therefore, we first transform the original problem \eqref{C2} into
\begin{subequations}\label{C4}
\begin{align}
\label{C4-a}\mathop {\max }\limits_{{{\bf{R}}_c},{{\bf{R}}_s},\left\{ {{{\bf{\Theta }}_k}} \right\}}\;\;& {\log _2}\det \left( {1 + \frac{{{{\bf{H}}_c}{{\bf{R}}_c}{\bf{H}}_c^H}}{{{\sigma ^2}}}} \right)\\
\label{C4-b}{\rm{s.t.}}\;\;\;\;\;\;\;&{\bf{a}}_{\rm{B}}^H\left( {{{\bf{R}}_c} + {{\bf{R}}_s}} \right){{\bf{a}}_{\rm{B}}} \ge {\Gamma _s},\\
\label{C4-c}&\eqref{C1-c}, \eqref{C1-d},\eqref{C1-e},
\end{align}
\end{subequations}
where ${\Gamma _s} = \sigma _s^2/\left( {2T{{\left| {\tilde \beta } \right|}^2}\left\| {{{{\bf{\dot a}}}_s}\left( {{\mu _{\rm{T}}}} \right)} \right\|_2^2\varepsilon } \right)$.
\begin{rem}
Before proceeding to solve problem \eqref{C4}, we discuss the two special cases with sole communication service and sole target sensing service, respectively. First, considering the communication service only, the maximum achievable rate can be obtained by solving the following problem:
\begin{subequations}\label{C5}
\begin{align}
\label{C5-a}\mathop {\max }\limits_{{{\bf{R}}_c},\left\{ {{{\bf{\Theta }}_k}} \right\}}\;\;& {\log _2}\det \left( {1 + \frac{{{{\bf{H}}_c}{{\bf{R}}_c}{\bf{H}}_c^H}}{{{\sigma ^2}}}} \right)\\
\label{C5-b}{\rm{s.t.}}\;\;\;\;\;&{\mathop{\rm Tr}\nolimits} \left( {{{\bf{R}}_c}} \right) \le {P_{\max }},{{\bf{R}}_c} \succeq {\bf{0}},\\
\label{C5-c}&\eqref{C1-e}.
\end{align}
\end{subequations}
It is not difficult to show that the optimal solution of problem \eqref{C5} is given by
\begin{align}\label{phase_IRS_opt_com}
{\left[ {{\bf{\Theta }}_k^{c*}} \right]_{n,n}} = \frac{{\left[ {{\rm{diag}}\left( {{\bf{b}}_{{\rm{I}},k}^H\left( {\mu _{{\rm{IU}},k}^{\rm{D}}} \right)} \right){{\bf{b}}_{{\rm{I}},k}}\left( {\mu _{{\rm{BI}},k}^{\rm{A}}} \right)} \right]_n^*}}{{\left| {{{\left[ {{\rm{diag}}\left( {{\bf{b}}_{{\rm{I}},k}^H\left( {\mu _{{\rm{IU}},k}^{\rm{D}}} \right)} \right){{\bf{b}}_{{\rm{I}},k}}\left( {\mu _{{\rm{BI}},k}^{\rm{A}}} \right)} \right]}_n}} \right|}},
\end{align}
\begin{align}\label{transmit_opt_com}
{\bf{R}}_c^* = \sum\nolimits_{k = 1}^K {p_{c,k}^*{{{\bf{\tilde a}}}_{\rm{B}}}\left( {\mu _{{\rm{BI}},k}^{\rm{D}}} \right)} {\bf{\tilde a}}_{\rm{B}}^H\left( {\mu _{{\rm{BI}},k}^{\rm{D}}} \right),
\end{align}
where $p_{c,k}^* = {\left( {\nu  - {\sigma ^2}/{\Upsilon ^2}\left( {{\bf{\Theta }}_k^{c*}} \right)} \right)^ + },\forall k$ with $\nu $ denoting the water level which can be obtained according to $\sum\nolimits_{k = 1}^K {p_{c,k}^*}  = {P_{\max }}$. With the obtained $\left\{ {{\bf{\Theta }}_k^{c*},{\bf{R}}_c^*} \right\}$, the maximum achievable rate of the CU is ${R_{\max }} = \sum\nolimits_{k = 1}^K {{{\log }_2}\left( {1 + \frac{{p_{c,k}^*{\Upsilon ^2}\left( {{\bf{\Theta }}_k^{c*}} \right)}}{{{\sigma ^2}}}} \right)} $. Next, we consider the special case of target sensing only and the associated CRB minimization problem is given by
\begin{subequations}\label{C6}
\begin{align}
\label{C6-a}\mathop {\max }\limits_{{{\bf{R}}_s},\left\{ {{{\bf{\Theta }}_k}} \right\}}\;\;&{\bf{a}}_{\rm{B}}^H{{\bf{R}}_s}{{\bf{a}}_{\rm{B}}}\\
\label{C6-b}{\rm{s.t.}}\;\;\;\;\;&{\mathop{\rm Tr}\nolimits} \left( {{{\bf{R}}_s}} \right) \le {P_{\max }},{{\bf{R}}_s} \succeq {\bf{0}},\\
\label{C6-c}&\eqref{C1-e}.
\end{align}
\end{subequations}
The optimal solution of problem \eqref{C6} can be derived as
\begin{align}\label{phase_IRS_opt_sensing}
{\left[ {{\bf{\Theta }}_k^{{\rm{s}}*}} \right]_{n,n}} = \frac{{\left[ {{\rm{diag}}\left( {{\bf{b}}_{{\rm{I}},k}^H\left( {\mu _{{\rm{IT}},k}^{\rm{D}}} \right)} \right){{\bf{b}}_{{\rm{I}},k}}\left( {\mu _{{\rm{BI}},k}^{\rm{A}}} \right)} \right]_n^*}}{{\left| {{{\left[ {{\rm{diag}}\left( {{\bf{b}}_{{\rm{I}},k}^H\left( {\mu _{{\rm{IT}},k}^{\rm{D}}} \right)} \right){{\bf{b}}_{{\rm{I}},k}}\left( {\mu _{{\rm{BI}},k}^{\rm{A}}} \right)} \right]}_n}} \right|}},
\end{align}
\begin{align}\label{transmit_opt_sensing}
{\bf{R}}_s^* = {P_{\max }}\frac{{{{\bf{a}}_{\rm{B}}}{\bf{a}}_{\rm{B}}^H}}{{\left\| {{{\bf{a}}_{\rm{B}}}} \right\|_2^2}}.
\end{align}
With the obtained $\left\{ {{\bf{\Theta }}_k^{{\rm{s}}*},{\bf{R}}_s^*} \right\}$, the resulting CRB can be obtained in \eqref{CRB_special2}.
\end{rem}

Motivated by the discussion in Remark 1, we construct the covariance matrixes of the transmitted communication signals and sensing signals as
\begin{align}\label{transmit_signal_structure}
{{\bf{R}}_c} = \sum\limits_{k = 1}^K {{p_{c,k}}{{{\bf{\tilde a}}}_{\rm{B}}}\left( {\mu _{{\rm{BI}},k}^{\rm{D}}} \right)} {\bf{\tilde a}}_{\rm{B}}^H\left( {\mu _{{\rm{BI}},k}^{\rm{D}}} \right),{{\bf{R}}_s} = {p_s}\frac{{{{\bf{a}}_{\rm{B}}}{\bf{a}}_{\rm{B}}^H}}{{\left\| {{{\bf{a}}_{\rm{B}}}} \right\|_2^2}}.
\end{align}
Based on \eqref{transmit_signal_structure}, ${\bf{a}}_{\rm{B}}^H\left( {{{\bf{R}}_c} + {{\bf{R}}_s}} \right){{\bf{a}}_{\rm{B}}}$ can be expressed as
\begin{align}\label{sensing_power}
&{\bf{a}}_{\rm{B}}^H\left( {{{\bf{R}}_c} + {{\bf{R}}_s}} \right){{\bf{a}}_{\rm{B}}}\nonumber\\
& = {p_s}{M_t}\sum\nolimits_{k = 1}^K {\rho _{{\rm{BI}},k}^2{{\left| {\gamma _k^s} \right|}^2}}  + {M_t}\sum\nolimits_{k = 1}^K {\rho _{{\rm{BI}},k}^2{{\left| {\gamma _k^s} \right|}^2}} {p_{c,k}}\nonumber\\
& \buildrel \Delta \over = {f_s}\left( {\left\{ {{p_{c,k}}} \right\},{p_s}} \right)
\end{align}

Then, problem \eqref{C4} can be transformed into the joint power allocation $\left\{ {{p_{c,k}},{p_s}} \right\}$ and IRSs' phase-shifts ${{\bf{\Theta }}_k}$ optimization problem as
\begin{subequations}\label{C7}
\begin{align}
\label{C7-a}\mathop {\max }\limits_{\left\{ {{p_{c,k}},{p_s}} \right\},\left\{ {{{\bf{\Theta }}_k}} \right\}}\;\;&\sum\nolimits_{k = 1}^K {{{\log }_2}\left( {1 + \frac{{{p_{c,k}}{\Upsilon ^2}\left( {{{\bf{\Theta }}_k}} \right)}}{{{\sigma ^2}}}} \right)} \\
\label{C7-b}{\rm{s.t.}}\;\;\;\;\;\;\;\;\;&{f_s}\left( {\left\{ {{p_{c,k}}} \right\},{p_s}} \right) \ge {\Gamma _s},\\
\label{C7-c}&\sum\nolimits_{k = 1}^K {{p_{c,k}}}  + {p_s} \le {P_{\max }},\\
\label{C7-d}&{p_{c,k}} \ge 0, {p_{s}} \ge 0,\\
\label{C7-e}&\eqref{C1-e}.
\end{align}
\end{subequations}

\subsection{Special Case Study: Co-located CU and Target}
To gain more useful insights, we first study a special case where the served CU is also a target to be sensed. One typical example of this scenario is sensing-assisted communication in the vehicle-to-infrastructure application, where a BS needs to communicate with a CU while tracking its movement simultaneously \cite{lu2022degrees}. In this case, we have $\mu _{{\rm{IU,}}k}^{\rm{D}} = \mu _{{\rm{IT,}}k}^{\rm{D}},\forall k$, which leads to that the optimal IRSs phase-shifts, denoted by $\left\{ {{\bf{\Theta }}_k^*} \right\}$, satisfy the condition ${\bf{\Theta }}_k^* = {\bf{\Theta }}_k^{c*} = {\bf{\Theta }}_k^{s*}$. Recall that the closed-form expressions of ${\bf{\Theta }}_k^{c*}$ and ${\bf{\Theta }}_k^{s*}$ are given in \eqref{phase_IRS_opt_com} and \eqref{phase_IRS_opt_sensing}, respectively. By substituting the optimal ${\bf{\Theta }}_k^*$ into the objective function \eqref{C7-a} and constraint \eqref{C7-b}, problem \eqref{C7} can be equivalently transformed into
\begin{subequations}\label{C8}
\begin{align}
\label{C8-a}\mathop {\max }\limits_{\left\{ {{p_{c,k}}} \right\},{p_s}} \;\;&\sum\limits_{k = 1}^K {{{\log }_2}\left( {1 + \frac{{{M_t}{M_r}{p_{c,k}}\rho _{{\rm{BI}},k}^2\rho _{{\rm{IU}},k}^2N_k^2}}{{{\sigma ^2}}}} \right)}\\
\label{C8-b}{\rm{s.t.}}\;\;\;\;\;&{p_s}{M_t}\sum\nolimits_{k = 1}^K {\rho _{{\rm{BI}},k}^2N_k^2} \nonumber\\
& + {M_t}\sum\nolimits_{k = 1}^K {\rho _{{\rm{BI}},k}^2N_k^2} {p_{c,k}} \ge {\Gamma _s},\\
\label{C8-c}&\eqref{C7-c}, \eqref{C7-d}.
\end{align}
\end{subequations}
In problem \eqref{C8}, the objective function is concave with respect to ${\left\{ {{p_{c,k}}} \right\}}$ and the constraints in \eqref{C7-c}, \eqref{C7-d} are both affine. As a result, problem \eqref{C8} is a convex optimization problem, which can be solved optimally by using the Lagrange duality method. Denote the optimal solution of problem \eqref{C8} as $\left\{ {p_{c,k}^ \star ,p_s^ \star } \right\}$. The closed-form expressions of the optimal solution to problem \eqref{C8} are derived in the following theorem.
\begin{thm}
If the following condition
\begin{align}\label{sensing_signal_condition11}
\sum\limits_{k = 1}^K {{{\left[ {\frac{1}{{\mu^\star {M_t}\sum\nolimits_{l \ne k}^K {\rho _{{\rm{BI}},l}^2N_l^2} }} \!\!-\!\! \frac{{{\sigma ^2}}}{{{M_t}{M_r}\rho _{{\rm{BI}},k}^2\rho _{{\rm{IU}},k}^2N_k^2}}} \right]}^ + }} \!\! <\!\! {P_{\max }}
\end{align}
is satisfied, the optimal power allocation is given by
\begin{align}\label{power_com11}
p_{c,k}^ \star  &= p_{c,k}^{\left( {\rm{I}} \right) }\left( {{\mu^\star}} \right)\nonumber\\
& = {\left[ {\frac{1}{{\mu^\star{M_t}\sum\nolimits_{l \ne k}^K {\rho _{{\rm{BI}},l}^2N_l^2} }} \!\!-\!\! \frac{{{\sigma ^2}}}{{{M_t}{M_r}\rho _{{\rm{BI}},k}^2\rho _{{\rm{IU}},k}^2N_k^2}}} \right]^ + }\nonumber\\
&\hspace{-0.5cm}p_s^ \star  = {P_{\max }} - \sum\nolimits_{k = 1}^K {p_{c,k}^{\left( {\rm{I}} \right)}\left( {{\mu ^ \star }} \right)},
\end{align}
where ${{\mu ^ \star }}$ is the unique root of
\begin{align}\label{equation1}
{{\cal G}_{{\rm{s,1}}}}\left( \mu  \right)  \buildrel \Delta \over = & {M_t}\sum\nolimits_{k = 1}^K {\left( {{P_{\max }} - \sum\nolimits_{l \ne k}^K {p_{c,l}^{\left( {\rm{I}} \right)}\left( \mu  \right)} } \right)\rho _{{\rm{BI}},k}^2N_k^2} \nonumber\\
& - {\Gamma _s} = 0.
\end{align}
In contrast, if condition \eqref{sensing_signal_condition11} is not satisfied, we have
\begin{align}\label{power_com12}
&p_{c,k}^ \star  = p_{c,k}^{{\rm{II}}}\left( {{\nu ^ \star }} \right)={\left[ {\frac{1}{\nu ^ \star} - \frac{{{\sigma ^2}}}{{{M_t}{M_r}\rho _{{\rm{BI}},k}^2\rho _{{\rm{IU}},k}^2N_k^2}}} \right]^ + }\nonumber\\
&p_s^ \star = 0,
\end{align}
provided that
\begin{align}\label{sensing_signal_condition22}
\sum\limits_{k = 1}^K {{{\left[ {\frac{{{M_t}\rho _{{\rm{BI}},k}^2N_k^2}}{{{\nu ^ \star }}} - \frac{{{\sigma ^2}}}{{{M_r}\rho _{{\rm{IU}},k}^2}}} \right]}^ + } \ge {\Gamma _s}},
\end{align}
where $\nu^ \star$ is the unique root of $\sum\nolimits_{k = 1}^K {p_{c,k}^{{\rm{II}}}\left( \nu  \right)} = {P_{\max }}$. Otherwise, we have
\begin{align}\label{power_com13}
&p_{c,k}^ \star  = p_{c,k}^{{\rm{III}}}\left( {{\mu ^ \star },{\nu ^ \star }} \right) \nonumber\\
& = {\left[ {\frac{1}{{{\mu ^ \star }{M_t}\sum\nolimits_{l \ne k}^K {\rho _{{\rm{BI}},l}^2N_l^2}  + {\nu ^ \star }}} - \frac{{{\sigma ^2}}}{{{M_t}{M_r}\rho _{{\rm{BI}},k}^2\rho _{{\rm{IU}},k}^2N_k^2}}} \right]^ + },\nonumber\\
&p_s^ \star = 0,
\end{align}
where $\nu^ \star$ and $\mu^ \star$ are unique roots of  $\sum\nolimits_{k = 1}^K {p_{c,k}^ {{\rm{III}}} \left( {\mu ,\lambda } \right)}  = {P_{\max }}$ and $\sum\nolimits_{k = 1}^K {p_{c,k}^{{\rm{III}}}\left( {\mu ,\lambda } \right)} {M_t}\rho _{{\rm{BI}},k}^2N_k^2 = {\Gamma _s}$
\end{thm}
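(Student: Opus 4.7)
The plan is to exploit the convex structure of problem \eqref{C8} and invoke Lagrange duality with KKT conditions. First I would verify that problem \eqref{C8} is a convex program: the objective \eqref{C8-a} is concave in $\{p_{c,k}\}$ (a sum of logarithms of affine functions), while constraints \eqref{C8-b}, \eqref{C7-c}, and \eqref{C7-d} are all affine in $(\{p_{c,k}\}, p_s)$. Slater's condition is easily checked (e.g., pick a small uniform $p_s$ that makes \eqref{C8-b} strictly feasible and scale down $p_{c,k}$), so strong duality holds and the KKT conditions are necessary and sufficient for optimality.

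Next I would form the Lagrangian with multipliers $\mu \ge 0$ for the CRB constraint \eqref{C8-b} and $\nu \ge 0$ for the power budget \eqref{C7-c}, and write the stationarity conditions. The derivative with respect to $p_{c,k}$ (together with the non-negativity multiplier) yields the generalized water-filling form
\begin{align}
p_{c,k}^\star = \left[\frac{1}{\nu^\star - \mu^\star M_t \rho_{\mathrm{BI},k}^2 N_k^2} - \frac{\sigma^2}{M_t M_r \rho_{\mathrm{BI},k}^2 \rho_{\mathrm{IU},k}^2 N_k^2}\right]^+, \nonumber
\end{align}
while the derivative with respect to $p_s$ gives the crucial linking identity $\nu^\star = \mu^\star M_t \sum_{k=1}^K \rho_{\mathrm{BI},k}^2 N_k^2$ whenever $p_s^\star > 0$. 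This observation is what collapses the two multipliers into a single unknown in Case I and drives the entire case split.

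I would then partition the analysis according to which constraints are active. \emph{Case I} corresponds to $p_s^\star > 0$: complementary slackness forces \eqref{C7-c} to bind, the identity above eliminates $\nu^\star$, and substituting back recovers $p_{c,k}^{(\mathrm{I})}(\mu^\star)$; the remaining equation $p_s^\star = P_{\max} - \sum_k p_{c,k}^{(\mathrm{I})}(\mu^\star) > 0$ combined with a binding CRB constraint gives \eqref{equation1}, whose left side is monotone in $\mu$ so the root is unique. \emph{Case II} corresponds to the CRB constraint being inactive, i.e. $\mu^\star = 0$; then $p_s^\star = 0$ is optimal (sensing contributes nothing to the rate) and $\{p_{c,k}\}$ reduces to ordinary water-filling, valid exactly when the resulting allocation already satisfies \eqref{C8-b}, which is precisely \eqref{sensing_signal_condition22}. \emph{Case III} corresponds to $p_s^\star = 0$ with both constraints binding: $\mu^\star,\nu^\star > 0$ appear jointly and are pinned down by the two binding equations. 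The regime boundary between Case I and Case III is obtained by testing whether the Case-I allocation is consistent with $p_s^\star > 0$, which is exactly condition \eqref{sensing_signal_condition11}.

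The main obstacle, as I anticipate, is not the algebra of the KKT stationarity but the bookkeeping of the case analysis: one must show that conditions \eqref{sensing_signal_condition11} and \eqref{sensing_signal_condition22} are mutually exclusive and exhaustive, and that within each case the relevant dual variables are uniquely determined. I would handle this by showing the monotonicity of $\mathcal{G}_{s,1}(\mu)$ and of the water-filling sums in $\nu$, and by verifying that the two defining equations in Case III admit a unique solution (for instance, by reducing them to a one-dimensional monotone equation after eliminating one multiplier). Once these monotonicity and uniqueness facts are in place, substituting the case-specific solutions back into the KKT system confirms optimality and completes the proof.
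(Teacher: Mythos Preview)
Your proposal is correct and follows essentially the same route as the paper's Appendix~C: verify convexity and Slater's condition, write the KKT stationarity and complementary-slackness relations, and then case-split on whether $p_s^\star>0$ and whether the CRB constraint binds, using monotonicity to pin down the dual variables uniquely in each regime. The only cosmetic difference is that the paper carries an explicit third multiplier for the constraint $p_s\ge 0$ (called $\nu$ there, with $\lambda$ for the power budget), so its Case~III expression is an affine reparametrization of yours via $\nu_{\text{thm}} = \nu_{\text{you}} - \mu^\star M_t\sum_k \rho_{\mathrm{BI},k}^2 N_k^2$; otherwise the arguments coincide.
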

\begin{proof}
Please refer to Appendix C.
\end{proof}

Theorem 2 unveils the optimal power allocation structure for balancing the tradeoff of communication and sensing. Note that \eqref{sensing_signal_condition11} in Theorem 2 serves as a necessary and sufficient condition for ensuring that a dedicated sensing signal is activated. When \eqref{sensing_signal_condition11} is satisfied, the optimal power allocation for information data streams admits a multi-level water-filling structure, and the remaining power is allocated to the dedicated sensing space for enhancing the sensing power. In this case, the sensing power constraint \eqref{C8-b} is met with equality. Under the condition of \eqref{sensing_signal_condition11} is not satisfied, the dedicated sensing signal is not needed, thereby allocating all the power to the communication space is sufficient for satisfying the performance requirement of the sensing task. In this case, the optimal tradeoff between sensing and communication is achieved by varying the water-levels allocated into different sub-channels provided that \eqref{sensing_signal_condition22} is not satisfied, as illustrated in \eqref{power_com13}. In contrast, when the condition \eqref{sensing_signal_condition22} is satisfied, it is interesting to observe that the optimal power allocation in \eqref{power_com12} reduces to the conventional water-filling structure with a constant power level.

\begin{rem}
The dedicated sensing signal activation condition in \eqref{sensing_signal_condition11} has an interesting interpretation. In particular, given the system parameters (i.e., ${P_{\max }}$, ${\Gamma _s}$, and ${M_t}$) and the channel setup, a virtual transmit power $p_{c,k}^{\left( {\rm{I}} \right) }\left( {{\mu^\star}} \right)$ for each data stream according to \eqref{power_com11}. By letting the BS transmit at this power, i.e.,  $p_{c,k}^{\left( {\rm{I}} \right) }\left( {{\mu^\star}} \right)$, the total transmit power allocated to the communication signal can be obtained as $\sum\nolimits_{k = 1}^K {p_{c,k}^{\left( {\rm{I}} \right)}\left( {{\mu ^ \star }} \right)}$. As a result, ${P_{\max }} > \sum\nolimits_{k = 1}^K {p_{c,k}^{\left( {\rm{I}} \right)}\left( {{\mu ^ \star }} \right)}$ means that using communication-oriented signal only is not sufficient to meet the sensing performance requirement, which leads to the fact that the dedicated sensing signal is needed to further improve the echo power. In contrast, ${P_{\max }} \le \sum\nolimits_{k = 1}^K {p_{c,k}^{\left( {\rm{I}} \right)}\left( {{\mu ^ \star }} \right)}$ means that exploiting purely communication-oriented signal is already sufficient for achieving the sensing requirement and hence dedicated sensing signal is not needed to save transmit power for improving the communication performance. Therefore, whether the dedicated sensing signal is needed or not, fundamentally depends on whether the available transmit power is the bottleneck for meeting the sensing performance requirement. Let
\begin{align}\label{remaining_power}
{\cal F}\left( {{P_{\max }},{M_t},{\Gamma _s}} \right) = {P_{\max }} - \sum\nolimits_{k = 1}^K {p_{c,k}^{\left( {\rm{I}} \right)}\left( {{\mu ^ \star }} \right)}.
\end{align}
\end{rem}
It is evident that ${\cal F}\left( {{P_{\max }},{M_t},{\Gamma _s}} \right)$ increases and decreases with $\left\{ {{P_{\max }},{M_t}} \right\}$ and ${\Gamma _s}$, respectively. The results agree with the intuition that low ${{P_{\max }},{M_t}}$ or high ${{\Gamma _s}}$ will make dedicated sensing signal more likely to be activated and vice versa.

Then, we focus on discussing the impact of the number of total IRS elements on the optimal power allocation structure. Recall that ${N_k} = N/K,\forall k$, and then the following proposition is obtained.
\begin{pos}
The optimal solution of problem \eqref{C8} is given by
\begin{align}\label{power_com_special}
&p_{c,k}^ \star  = \frac{{{P_{\max }}}}{K} - \frac{{{K^2}{\sigma ^2}}}{{{M_t}{M_r}{N^2}}}{\xi _k} > 0, \forall k, ~~p_s^ \star = 0,
\end{align}
if and only if the following condition is satisfied
\begin{align}\label{condition_optimal}
N \!>\! \max \left\{ {\sqrt {\frac{{{K^3}\left( {{\Gamma _s} \!\!+\!\! \frac{{{\sigma ^2}}}{{{M_r}}}\sum\limits_{k = 1}^K {{\xi _k}\rho _{{\rm{BI}},k}^2} } \right)}}{{{P_{\max }}{M_t}\sum\limits_{k = 1}^K {\rho _{{\rm{BI}},k}^2} }}} ,\sqrt {\frac{{{K^3}\xi {\sigma ^2}}}{{{P_{\max }}{M_t}{M_r}}}} } \right\},
\end{align}
where
\begin{align}\label{diff_pathloss}
{\xi _k} = \frac{1}{{\rho _{{\rm{BI}},k}^2\rho _{{\rm{IU}},k}^2}} - \frac{1}{K}\sum\nolimits_{k = 1}^K {\frac{1}{{\rho _{{\rm{BI}},k}^2\rho _{{\rm{IU}},k}^2}}}
\end{align}
and $\xi  = \mathop {\max }\limits_{k \in {\cal K}} \left\{ {{\xi _k}} \right\}$.
\end{pos}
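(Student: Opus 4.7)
The plan is to derive Proposition 4 by specializing the case analysis of Theorem 2 to the uniform allocation $N_k = N/K$, and then identifying the regime in which the optimal solution reduces to \eqref{power_com_special}. Since the prescribed solution has $p_s^\star = 0$ with every $p_{c,k}^\star > 0$, by Theorem 2 the relevant sub-case is the conventional water-filling in \eqref{power_com12}, which applies precisely when (a) the activation condition \eqref{sensing_signal_condition11} fails so that no dedicated sensing signal is used, and (b) the resulting communication-only water-filling already satisfies the sensing constraint, i.e., \eqref{sensing_signal_condition22} holds. The strategy is therefore to compute the corresponding water level under $N_k = N/K$, substitute it into the candidate $p_{c,k}^\star$, and then translate these two consistency requirements, together with positivity of each $p_{c,k}^\star$, into inequalities on $N$.

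With $N_k = N/K$, \eqref{power_com12} becomes $p_{c,k}^\star = 1/\nu^\star - K^2\sigma^2/(M_t M_r N^2 \rho_{\mathrm{BI},k}^2 \rho_{\mathrm{IU},k}^2)$, and under the tentative no-truncation assumption the power constraint $\sum_k p_{c,k}^\star = P_{\max}$ yields $1/\nu^\star = P_{\max}/K + (K\sigma^2/(M_t M_r N^2))\sum_k 1/(\rho_{\mathrm{BI},k}^2 \rho_{\mathrm{IU},k}^2)$. Back-substitution, combined with the definition of $\xi_k$ in \eqref{diff_pathloss}, reproduces \eqref{power_com_special} exactly. Because $\sum_k \xi_k = 0$ by construction, the identity $\sum_k p_{c,k}^\star = P_{\max}$ holds automatically, which is consistent with $p_s^\star = 0$ and with the total power budget being tight on the communication signal alone.

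Next, the positivity $p_{c,k}^\star > 0$ for every $k$ is equivalent to $P_{\max}/K > K^2 \sigma^2 \xi_k/(M_t M_r N^2)$ for all $k$; taking the maximum over $k$ recovers the second argument of the max in \eqref{condition_optimal}. Plugging the candidate into the sensing constraint $(M_t N^2/K^2) \sum_k \rho_{\mathrm{BI},k}^2 p_{c,k}^\star \ge \Gamma_s$ and simplifying yields $N^2 \ge K^3(\Gamma_s + (\sigma^2/M_r)\sum_k \xi_k \rho_{\mathrm{BI},k}^2)/(P_{\max} M_t \sum_k \rho_{\mathrm{BI},k}^2)$, which is the first argument of the max. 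Under the computed $\nu^\star$, this same inequality is precisely the specialization of \eqref{sensing_signal_condition22} to the current setting and, at the same time, forces \eqref{sensing_signal_condition11} to fail. Hence both structural prerequisites of Case II of Theorem 2 are absorbed into this single bound, and the two-sided max in \eqref{condition_optimal} is exactly the intersection of the positivity requirement and the sensing-feasibility requirement.

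The most delicate step is the necessity direction. If \eqref{condition_optimal} is violated through the second term, some $p_{c,k}^\star$ computed via \eqref{power_com_special} becomes non-positive, so the true optimum must truncate the water-filling and therefore cannot coincide with the stated closed form. If instead \eqref{condition_optimal} is violated through the first term, then the sensing constraint must be binding with a strictly positive Lagrange multiplier, which places the optimum in the multi-level water-filling regime of \eqref{power_com13} rather than in \eqref{power_com12}. Either way the optimum departs from \eqref{power_com_special}, establishing the only-if direction; the if direction follows by a routine algebraic check that the candidate satisfies all KKT conditions of problem \eqref{C8} whenever both bounds in \eqref{condition_optimal} hold.
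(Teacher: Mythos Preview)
Your proposal is correct and follows essentially the same route as the paper's proof: specialize the water-filling solution \eqref{power_com12} to $N_k=N/K$, compute the water level, derive the positivity requirement (yielding the second term of the max) and the sensing-feasibility requirement via \eqref{sensing_signal_condition22} (yielding the first term), and combine. Your write-up is in fact more explicit than the paper's about the connection to the case structure of Theorem~2 and about the necessity direction, which the paper treats only implicitly.
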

\begin{proof}
First, to ensure that the maximum number of transmitted data streams is $K$, i.e., $p_{c,k}^ \star  > 0, \forall k$, the water-level ${{\nu ^ \star }}$ can be derived as
\begin{align}\label{water-level}
{{\nu ^ \star } = \frac{K}{{{P_{\max }} + \frac{{{K^2}{\sigma ^2}}}{{{M_t}{M_r}{N^2}}}\sum\nolimits_{k = 1}^K {\frac{1}{{\rho _{{\rm{BI}},k}^2\rho _{{\rm{IU}},k}^2}}} }}}.
\end{align}
Based on \eqref{water-level} and $p_{c,k}^ \star  > 0, \forall k$, we have
\begin{align}\label{full_stream}
\frac{{{P_{\max }}}}{K} - \frac{{{K^2}{\sigma ^2}}}{{{M_t}{M_r}{N^2}}}{\xi _k} > 0,\forall k,
\end{align}
which directly leads to
\begin{align}\label{N_threshold1}
{N > \sqrt {\frac{{{K^3}\xi {\sigma ^2}}}{{{P_{\max }}{M_t}{M_r}}}} }.
\end{align}
Then, by exploiting the result in \eqref{sensing_signal_condition22}, we have
\begin{align}\label{sensing_requirement}
\frac{{{P_{\max }}}}{{{K^3}}}{M_t}{N^2}\sum\nolimits_{k = 1}^K {\rho _{{\rm{BI}},k}^2 - } \frac{{{\sigma ^2}}}{{{M_r}}}\sum\nolimits_{k = 1}^K {\rho _{{\rm{BI}},k}^2{\xi _k} \ge {\Gamma _s}}.
\end{align}
Thus, we obtain
\begin{align}\label{N_threshold2}
N \ge \sqrt {\frac{{{K^3}\left( {{\Gamma _s} + \frac{{{\sigma ^2}}}{{{M_r}}}\sum\nolimits_{k = 1}^K {{\xi _k}\rho _{{\rm{BI}},k}^2} } \right)}}{{{P_{\max }}{M_t}\sum\nolimits_{k = 1}^K {\rho _{{\rm{BI}},k}^2} }}}.
\end{align}
By combining \eqref{N_threshold1} and \eqref{N_threshold2}, we obtain the result in \eqref{power_com_special}, which thus completes the proof.
\end{proof}

Proposition 4 directly sheds light on how many IRS elements are needed to ensure that conventional water-filling power allocations are optimal for achieving the maximum communication performance while satisfying the sensing performance requirement. When condition \eqref{condition_optimal} is satisfied, exploiting purely communication-optimal signals defined in \eqref{transmit_opt_com} may not lose the optimality of the original problem \eqref{C2}, which also implies that a dedicated sensing signal is not needed. Benefiting from the high passive beamforming gain introduced by a massive number of IRS elements, using a pure communication-optimal signal is sufficient for meeting the performance requirement of target sensing.

\begin{rem}
To capture the communication and sensing performance in a large number of IRS elements regime, i.e., $N \to \infty $, we have
\begin{align}\label{power_com_large_N}
&p_{c,k}^ \star  = \frac{{{P_{\max }}}}{K}, \forall k, ~~p_s^ \star = 0
\end{align}
according to the result in \eqref{power_com_special}. In this case, the maximum achievable rate of the proposed architecture is derived as
\begin{align}\label{rate_large_N}
{R_c} = \sum\nolimits_{k = 1}^K {{{\log }_2}\left( {1 + \frac{{{P_{\max }}\rho _{{\rm{BI,}}k}^2\rho _{{\rm{BI,}}k}^2{M_t}{M_r}}}{{{K^3}}}{N^2}} \right)}.
\end{align}
Thus, the scaling law of $R$ with respect to $N$ and $K$ can be obtained as
\begin{align}\label{rate_scaling_law}
{R_c} &\cong {\log _2}\left( {\prod\limits_{k = 1}^K {{P_{\max }}\rho _{{\rm{BI,}}k}^2\rho _{{\rm{BI,}}k}^2{M_t}{M_r}} } \right) + K{\log _2}\frac{{{N^2}}}{{{K^3}}}\nonumber\\
& \sim {\cal{O}}\left( {K{{\log }_2}\left( {{N^2}/{K^3}} \right)} \right).
\end{align}
Regarding the sensing performance, the closed-form expression of the CRB can be expressed as
\begin{align}\label{CRB_large_N}
{\rm{CRB}}\left( {{\mu _{\rm{T}}}} \right) &= \frac{{6\sigma _s^2{T^{ - 1}}{{\left| {\tilde \beta } \right|}^{ - 2}}{K^3}}}{{{\pi ^2}\left( {N_r^3 - {N_r}} \right){M_t}{P_{\max }}{N^2}\sum\nolimits_{k = 1}^K {\rho _{{\rm{BI,}}k}^2} }}\nonumber\\
&\sim {\cal{O}}\left( {{K^3}/\left( {{N^2}\left( {N_r^3 - {N_r}} \right)} \right)} \right).
\end{align}
The result in \eqref{CRB_large_N} sheds light on the interplay between the CRB and the number of elements/sensors deployed at IRSs, i.e., $N$ and ${N_r}$, which implies that CRB is inversely proportional to $\left( {N_r^3 - {N_r}} \right){N^2}$. Specifically, the reflecting elements provide a passive beamforming gain of ${\cal{O}}\left( {{N^2}} \right)$, while the sensors not only introduce the linear receive beamforming gain but also enhance the phase difference, i.e., $\left\| {{{{\bf{\dot a}}}_s}\left( {{u_T}} \right)} \right\|_2^2$.
\end{rem}
\subsection{General Case: Separate CU and Target}
For the general case that the locations of the CU and target are separated, the corresponding problem becomes more challenging than that of the co-located CU and target case since the IRSs phase-shifts are coupled with power allocations in both \eqref{C7-a} and \eqref{C7-b}. To tackle this difficulty, an alternating-based SCA algorithm is proposed. In particular, we partition all involved variables into two blocks, i.e., IRSs phase-shifts $\left\{ {{{\bf{\Theta }}_k}} \right\}$ and power allocations $\left\{ {{p_{c,k,}}{p_s}} \right\}$,  and then iteratively optimize these two blocks.
\subsubsection{Optimizing IRSs Phase-shifts}
For any given $\left\{ {{p_{c,k,}}{p_s}} \right\}$, the subproblem of optimizing $\left\{ {{{\bf{\Theta }}_k}} \right\}$ can be written as
\begin{subequations}\label{C100}
\begin{align}
\label{C100-a}\mathop {\max }\limits_{\left\{ {{{\bf{\Theta }}_k}} \right\}}\;\;&\sum\nolimits_{k = 1}^K {{{\log }_2}\left( {1 + \frac{{{p_{c,k}}{{\left| {\Upsilon \left( {{{\bf{\Theta }}_k}} \right)} \right|}^2}}}{{{\sigma ^2}}}} \right)}  \\
\label{C100-b}{\rm{s.t.}}\;\;\;&{f_s}\left( {\left\{ {{p_{c,k}}} \right\},{p_s}} \right) \ge {\Gamma _s},\\
\label{C100-c}&\eqref{C1-e}.
\end{align}
\end{subequations}
To facilitate the optimization, we define the following variables and notations: ${{\bf{v}}_k} = {\mathop{\rm Diag}\nolimits} \left( {{{\bf{\Theta }}_k}} \right) \in {\mathbb{C}^{{N_k} \times 1}}$, ${\bf{q}}_{c,k}^H = {\mathop{\rm diag}\nolimits} \left( {{{\bf{b}}_{{\rm{I}},k}}\left( {\mu _{{\rm{BI}},k}^{\rm{A}}} \right)} \right){\bf{b}}_{{\rm{I}},k}^H\left( {\mu _{{\rm{IU}},k}^{\rm{D}}} \right)\in {\mathbb{C}^{1 \times {N_k}}}$, ${{\bf{Q}}_{c,k}} = {{\bf{q}}_{c,k}}{\bf{q}}_{c,k}^H\in {\mathbb{C}^{{N_k} \times {N_k}}}$, ${\bf{q}}_{s,k}^H = {\mathop{\rm diag}\nolimits} \left( {{{\bf{b}}_{{\rm{I}},k}}\left( {\mu _{{\rm{BI}},k}^{\rm{A}}} \right)} \right){\bf{b}}_{{\rm{I}},k}^H\left( {\mu _{{\rm{IT}},k}^{\rm{D}}} \right)\in {\mathbb{C}^{1 \times {N_k}}}$, and ${{\bf{Q}}_{s,k}} = {{\bf{q}}_{s,k}}{\bf{q}}_{s,k}^H\in {\mathbb{C}^{{N_k} \times {N_k}}}$. With these definitions, the terms ${\left| {\Upsilon \left( {{{\bf{\Theta }}_k}} \right)} \right|^2}$ and ${\left| {\gamma _k^s} \right|^2}$ in \eqref{C7-a} and \eqref{C7-b} can be rewritten as
\begin{align}\label{terms100}
&{\left| {\Upsilon \left( {{{\bf{\Theta }}_k}} \right)} \right|^2} = {M_r}{M_t}\rho _{c,k}^2{\bf{v}}_k^H{{\bf{Q}}_{c,k}}{{\bf{v}}_k},\forall k,\nonumber\\
&{\left| {\gamma _k^s} \right|^2} = {\bf{v}}_k^H{{\bf{Q}}_{s,k}}{{\bf{v}}_k}, \forall k.
\end{align}
By introducing a set of slack variables $\left\{ {{S_k}} \right\}$, problem \eqref{C100}  can be recast as
\begin{subequations}\label{C101}
\begin{align}
\label{C101-a}\mathop {\max }\limits_{\left\{ {{{{\bf{v}}_k}}} \right\}}\;\;&\sum\nolimits_{k = 1}^K {{{\log }_2}\left( {1 + \frac{{{p_{c,k}}{S_k}}}{{{\sigma ^2}}}} \right)} \\
\label{C101-b}{\rm{s.t.}}\;\;\;&{S_k} \le {M_r}{M_t}\rho _{c,k}^2{\bf{v}}_k^H{{\bf{Q}}_{c,k}}{{\bf{v}}_k},\forall k,\\
\label{C101-c}&{M_t}\sum\nolimits_{k = 1}^K {\left( {{p_s} + {p_{c,k}}} \right)\rho _{{\rm{BI}},k}^2{\bf{v}}_k^H{{\bf{Q}}_{s,k}}{{\bf{v}}_k}}  \ge {\Gamma _s},\\
\label{C101-d}&\left| {{{\left[ {{{\bf{v}}_k}} \right]}_n}} \right| = 1,\forall k,\forall n.
\end{align}
\end{subequations}
Note that constraints \eqref{C101-b} and \eqref{C101-c} involve a super-level set of the convex quadratic functions, i.e., ${\bf{v}}_k^H{{\bf{Q}}_{c,k}}{{\bf{v}}_k}$ and ${{\bf{v}}_k^H{{\bf{Q}}_{s,k}}{{\bf{v}}_k}}$, which are non-convex. Additionally, the unit-modulus constraints in \eqref{C100-c} are  inherently non-convex as well. To handle these issues, we first relax the unit-modulus constraints as
\begin{align}\label{relax_unit_modulus}
\left| {{{\left[ {{{\bf{v}}_k}} \right]}_n}} \right| \le 1,\forall k,\forall n.
\end{align}
Regarding constraints \eqref{C101-b} and \eqref{C101-c}, we invoke the SCA technique to linearize ${\bf{v}}_k^H{{\bf{Q}}_{c,k}}{{\bf{v}}_k}$ and ${{\bf{v}}_k^H{{\bf{Q}}_{s,k}}{{\bf{v}}_k}}$  into linear forms by taking their first-order Taylor expansion at the given point ${\bf{v}}_k^{\left( l \right)}$ in the $l$-th iteration, yielding the following inequality
\begin{align}\label{SCA_inquality1}
{\bf{v}}_k^H{{\bf{Q}}_{s,k}}{{\bf{v}}_k} &\ge  - {\left( {{\bf{v}}_k^{\left( l \right)}} \right)^H}{{\bf{Q}}_{s,k}}{\bf{v}}_k^{\left( l \right)} + 2{\mathop{\rm Re}\nolimits} \left( {{{\left( {{\bf{v}}_k^{\left( l \right)}} \right)}^H}{{\bf{Q}}_{s,k}}{\bf{v}}} \right)\nonumber\\
&\buildrel \Delta \over = g_s^{{\rm{lb}}}\left( {{{\bf{v}}_k}} \right),
\end{align}
\vspace{-17pt}
\begin{align}\label{SCA_inquality2}
{\bf{v}}_k^H{{\bf{Q}}_{c,k}}{{\bf{v}}_k} &\ge  - {\left( {{\bf{v}}_k^{\left( l \right)}} \right)^H}{{\bf{Q}}_{c,k}}{\bf{v}}_k^{\left( l \right)} + 2{\mathop{\rm Re}\nolimits} \left( {{{\left( {{\bf{v}}_k^{\left( l \right)}} \right)}^H}{{\bf{Q}}_{c,k}}{\bf{v}}} \right)\nonumber\\
&\buildrel \Delta \over = g_c^{{\rm{lb}}}\left( {{{\bf{v}}_k}} \right).
\end{align}
Based on \eqref{SCA_inquality1} and \eqref{SCA_inquality2}, the newly formulated problem is obtained as
\begin{subequations}\label{C102}
\begin{align}
\label{C102-a}\mathop {\max }\limits_{\left\{ {{{{\bf{v}}_k}}} \right\}}\;\;&\sum\nolimits_{k = 1}^K {{{\log }_2}\left( {1 + \frac{{{p_{c,k}}{S_k}}}{{{\sigma ^2}}}} \right)} \\
\label{C102-b}{\rm{s.t.}}\;\;\;&{S_k} \le {M_r}{M_t}\rho _{c,k}^2g_c^{{\rm{lb}}}\left( {{{\bf{v}}_k}} \right),\forall k,\\
\label{C102-c}&{M_t}\sum\nolimits_{k = 1}^K {\left( {{p_s} + {p_{c,k}}} \right)\rho _{{\rm{BI}},k}^2g_s^{{\rm{lb}}}\left( {{{\bf{v}}_k}} \right)}  \ge {\Gamma _s},\\
\label{C102-c}&\left| {{{\left[ {{{\bf{v}}_k}} \right]}_n}} \right| \le 1,\forall k,\forall n.
\end{align}
\end{subequations}
It can be readily checked problem \eqref{C102} is convex, which thus can be solved by the interior point method.
\subsubsection{Optimizing Power Allocations}
Under the given $\left\{ {{{\bf{\Theta }}_k}} \right\}$, the optimization problem with respect to $\left\{ {{p_{c,k,}}{p_s}} \right\}$ is given by
\begin{subequations}\label{C103}
\begin{align}
\label{C103-a}\mathop {\max }\limits_{\left\{ {{p_{c,k}},{p_s}} \right\}}\;\;&\sum\nolimits_{k = 1}^K {{{\log }_2}\left( {1 + \frac{{{p_{c,k}}{\Upsilon ^2}\left( {{{\bf{\Theta }}_k}} \right)}}{{{\sigma ^2}}}} \right)} \\
\label{C103-b}&\eqref{C7-b},\eqref{C7-c},\eqref{C7-d}.
\end{align}
\end{subequations}
Problem \eqref{C103} is convex and its optimal solution can be derived similarly in Theorem 2, which is omitted for brevity.

\subsubsection{Overall Algorithm}
Summarily, the proposed algorithm updates $\left\{ {{{\bf{v}}_k}} \right\}$ and $\left\{ {{p_{c,k,}}{p_s}} \right\}$ alternately in an iterative manner, where the former is updated by solving subproblem \eqref{C102}, and the latter is updated by solving subproblem \eqref{C103}. The iteration terminates until the convergence is achieved. However, it is worth noting that the converged solution, denoted by $\left\{ {{{{\bf{\bar v}}}_k}} \right\}$, may not satisfy the unit-modules constraints in \eqref{C101-d}. In this case, one feasible solution, denoted by $\left\{ {{{{\bf{\tilde v}}}_k}} \right\}$, can be reconstructed as ${\left[ {{{{\bf{\tilde v}}}_k}} \right]_n} = {\left[ {{{{\bf{\bar v}}}_k}} \right]_n}/\left| {{{\left[ {{{{\bf{\bar v}}}_k}} \right]}_n}} \right|,\forall k,n$. With given $\left\{ {{{{\bf{\tilde v}}}_k}} \right\}$, the remaining optimization variables can be obtained by solving problem \eqref{C103}. The main complexity of the proposed algorithm is given by ${\cal{O}}\left( {{L_{{\rm{iter}}}}\left( {{{\left( {N + K} \right)}^{3.5}} + {{\left( {K + 1} \right)}^{3.5}}} \right)} \right)$, where ${{L_{{\rm{iter}}}}}$ denote the total number of iterations required for achieving convergence.

\section{Numerical results}
In this section, numerical results are provided to validate the performance of the proposed design and to draw useful insights. The BS and CU/target are located at $\left( {0,{\rm{ }}0} \right)$ meter (m) and $\left( {100,{\rm{ }}0} \right)$ m, respectively. The number of the transmit antennas at the BS and the receive antennas at the CU are set to $32$ and $8$, respectively. Under this setup, the candidate set of the potential IRS sites $\tilde {\cal P}_{{\rm{IRS}}}^{\rm{c}}$ can be obtained, which satisfies $\left| {\widetilde {\cal P}_{{\rm{IRS}}}^{\rm{c}}} \right| = \min \left\{ {{M_t},{M_r}} \right\} = 8$. The top view of the coordinates of BS, CU/target, and eight potential IRS sites is illustrated in Fig. \ref{location}. In Fig. \ref{location}, IRS 1 refers to the S-IRS, which is also equipped with active sensors. We adopt the distance-based path-loss model, i.e., $L\left( {d,\alpha } \right) = {K_0}{d^{ - \alpha }}$, with ${K_0} =  - 40$ dB and $\alpha  = 2$. Without loss of generality, other system parameters are set as follows: $T = 256$, ${P_{\max }} = 30$ dBm, ${N_r} = 8$, ${\sigma ^2} =  - 80$ dBm, and $\sigma _s^2 =  - 110$ dBm.
In the following, we first show the impact of the number of deployed IRSs on the sensing performance and then show the communication-sensing tradeoff under the proposed multi-IRS aided ISAC system.
\begin{figure}[t!]
\centering
\includegraphics[width=2.7in]{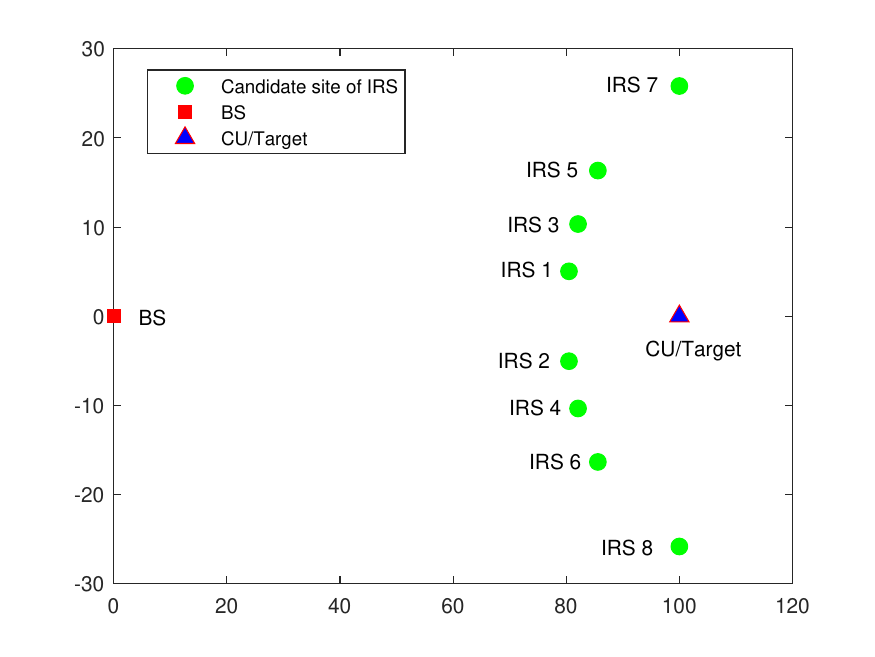}
\caption{{Top view of the coordinates of BS, CU/Target, and IRSs.}}
\label{location}
\end{figure}

\subsection{Multi-IRS Aided Sensing}
We first investigate a special case of ISAC, where the BS only provides sensing service to the target. For comparison, the following transmission schemes are considered: 1) \textbf{Sensing-oriented  transmission}: the IRS phase-shifts and transmit covariance matrix are set based on \eqref{phase_IRS_opt_sensing} and \eqref{transmit_opt_sensing}, respectively, to minimize CRB; 2) \textbf{Communication-oriented transmission}: the IRS phase-shifts and transmit covariance matrix are set based on \eqref{phase_IRS_opt_com} and \eqref{transmit_opt_com}, respectively, to maximize the rate of CU; 3)\textbf{Maximum eigenmode transmission}: the IRS phase-shifts are optimized based on \eqref{phase_IRS_opt_sensing} and the transmit covariance matrix is set to ${{\bf{R}}_x} = {P_{\max }}{{\bf{a}}_{\rm{B}}}\left( {\mu _{{\rm{BI,}}1}^{\rm{D}}} \right){\bf{a}}_{\rm{B}}^H\left( {\mu _{{\rm{BI,}}1}^{\rm{D}}} \right)/\left\| {{{\bf{a}}_B}\left( {\mu _{{\rm{BI,}}1}^{\rm{D}}} \right)} \right\|_2^2$.

\begin{figure}[t!]
\centering
\includegraphics[width=2.8in]{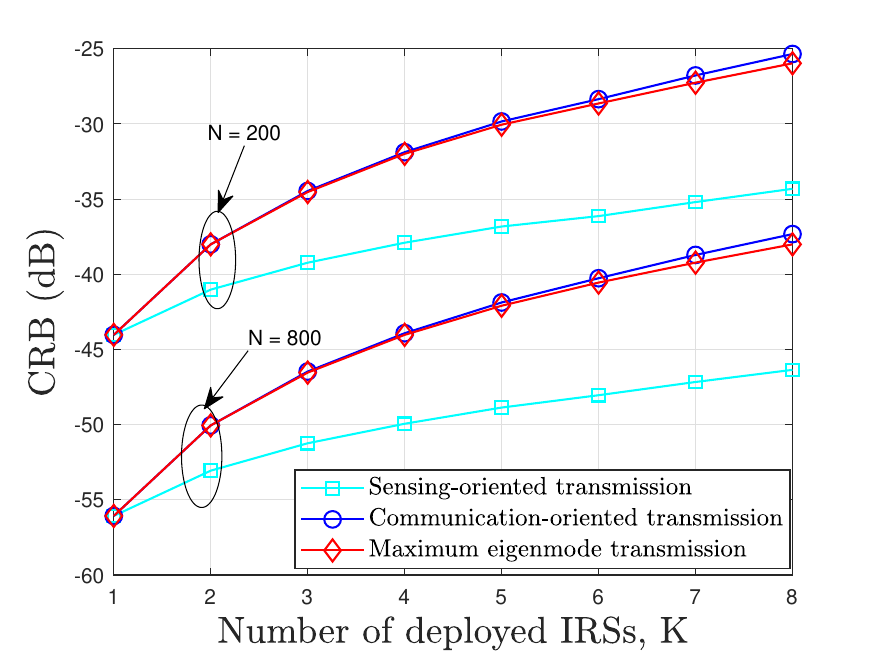}
\caption{{The CRB versus the number of IRSs $K$ under the given total number of passive elements.}}
\label{sensing1}
\end{figure}

In Fig. \ref{sensing1}, we plot the estimation CRB versus the number of deployed IRSs $K$ under the given total number of IRS elements $N$. First, it is observed that the CRB achieved by the sensing-oriented transmission is significantly lower than that achieved by the other two benchmark schemes especially as $K$ becomes large, which validates the effectiveness of the proposed design. Moreover, the estimation CRB of all the considered schemes increases as $K$ increases under the given $N$. This is because increasing the number of IRSs leads to more paths for the BS-IRS link, which brings difficulty in aligning the signals from different paths to the target and thus undermines the echo power at sensors. In contrast, deploying a single large-scale IRS is beneficial for aligning the signal to the target, thereby improving the echo power. The result demonstrates that increasing $K$ is harmful to the sensing performance, which agrees with the analysis in Proposition 1.

\begin{figure}[t!]
\centering
\includegraphics[width=2.8in]{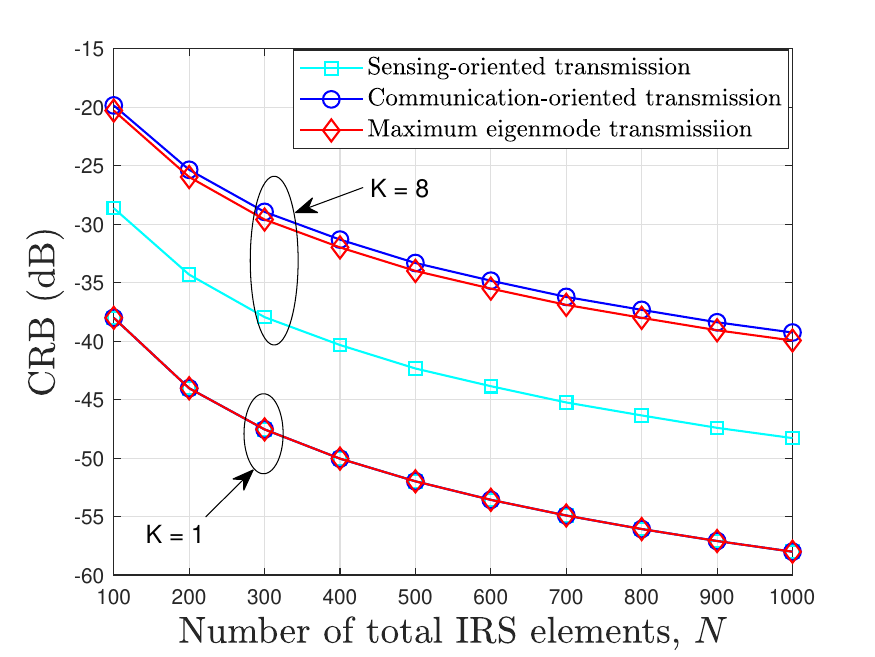}
\caption{{The CRB versus the number of total IRS elements under the different values of $K$.}}
\label{sensing1}
\end{figure}

To investigate the impact of the number of total IRS elements $N$ on the sensing performance, we plot the sensing CRB versus $N$ under the different numbers of deployed IRSs. It is observed that all the considered schemes achieve the same CRB under the setup of $K = 1$. The result is expected since both the schemes of communication-oriented transmission and maximum eigenmode transmission reduce to the sensing-oriented transmission when $K = 1$. Furthermore, one can observe that the CRB reductions of all the schemes are about 20 dB by increasing $N$ from 100 to 1000, which agrees with the discussion in Remark 3 that CRB is inversely proportional to ${N^2}$.

\begin{figure}[t!]
\centering
\includegraphics[width=2.8in]{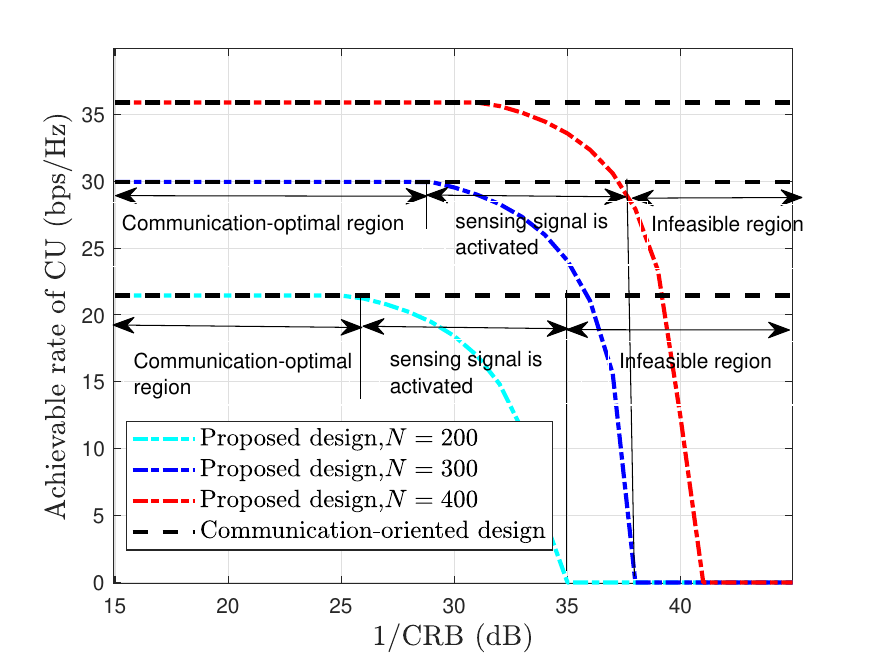}
\caption{{Achievable rate versus the multiplicative inverse of the required estimation CRB with $K = 8$.}}
\label{tradeoff1}
\end{figure}

\subsection{Multi-IRS Aided ISAC}
Next, we consider the case where the BS provides both the sensing and communication service in a multi-IRS-aided ISAC system. To illustrate the fundamental tradeoff between the communication and sensing performance, we plot the achievable rate of CU versus the multiplicative inverse of the required estimation CRB in Fig. \ref{tradeoff1}. Note that the communication-oriented design serves as the upper bound of the communication rate. It is observed that the proposed design can achieve the upper bound of the communication rate when the requirement of sensing performance is loose, wherein the dedicated sensing signal is not needed. As the requirement CRB decreases (i.e., $1/{\mathop{\rm CRB}\nolimits}$ increases), it can be observed that there exists a tradeoff between the achievable rate and estimation CRB, wherein the dedicated sensing signal is activated to meet the requirement of sensing performance. Nevertheless, it is worth noting that the region for ensuring the optimality of communication-oriented design can be effectively enlarged by increasing the total number of IRS elements $N$. The result demonstrates the potential of deploying large IRSs in achieving high-quality services of both sensing and communication.

\begin{figure}[t!]
\centering
\includegraphics[width=2.8in]{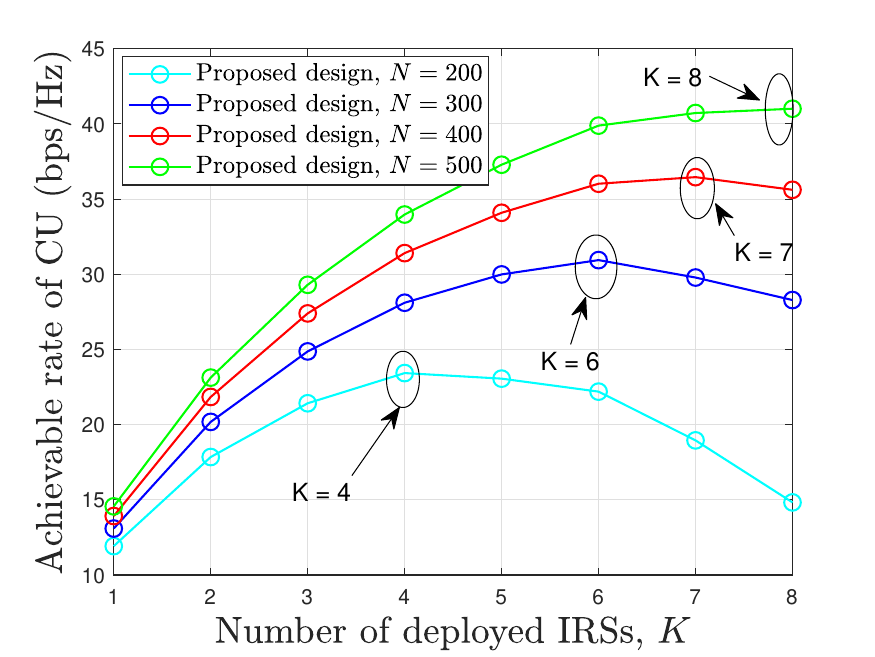}
\caption{{Achievable rate versus the number of deployed IRSs with the required CRB $-32$ dB.}}
\label{deployment_dis}
\end{figure}

In Fig. \ref{deployment_dis}, we study the impact of IRSs deployment on the system performance by plotting the achievable rate of CU versus the number of deployed IRSs under the given sensing performance requirement. Under the low value of $N$, it is observed that the achievable rate first increases and then decreases with $K$ increases, which indicates that there exists an optimal number of deployed IRSs. The result highlights the importance of optimizing the set ${{\cal P}_{{\rm{IRS}}}}$ to flexibly balance the sensing and communication performance. In contrast, we observe that the achievable rate of CU monotonically increases with $K$ increases when $N$ becomes large. The reason is that the power loss incurred by the multi-path of the first-hop BS-IRS link can be effectively compensated by the high passive beamforming gain introduced by large-scale IRSs. In this case, increasing the number of IRSs significantly improve the DoFs of spatial multiplexing, which is able to enhance the communication rate while satisfying the sensing performance requirement.

\begin{figure}[t!]
\centering
\includegraphics[width=2.8in]{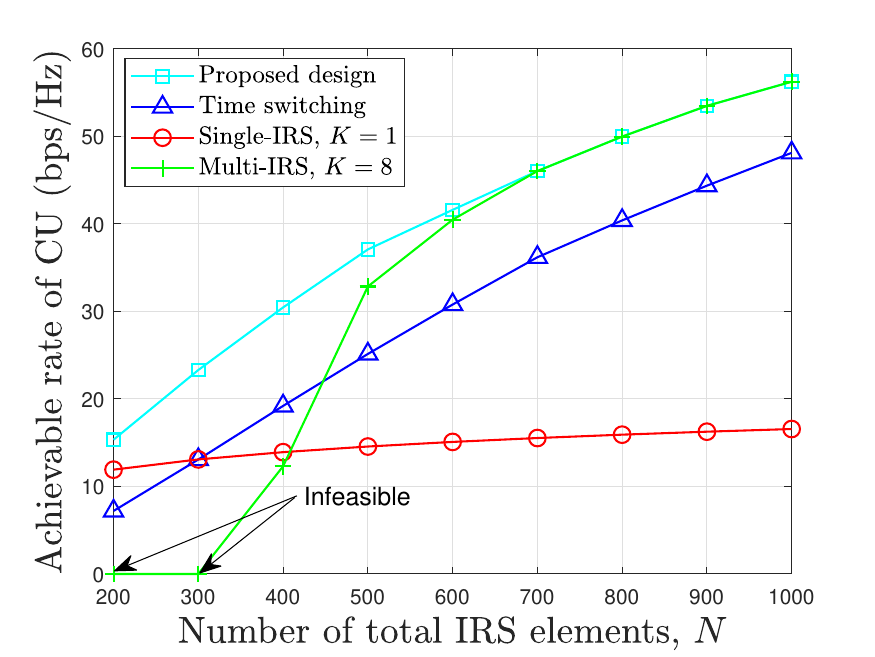}
\caption{{Achievable rate versus the number of total IRS elements with the target CRB $-42$ dB.}}
\label{performance1}
\end{figure}

Finally, we evaluate the performance of the proposed designs in terms of the joint optimization of the transceiver designs and IRSs deployment. For comparison, we consider the following benchmark schemes: 1) \textbf{Time-switching}: Under the optimized IRSs position site ${{\cal P}_{{\rm{IRS}}}}$, the BS time switches between the communication-oriented design and sensing-oriented design; 2) \textbf{Single-IRS}, $K = 1$: the transceiver design in Section IV is performed under ${{\cal P}_{{\rm{IRS}}}} = \left\{ {{\rm{IRS ~1}}} \right\}$; 3) \textbf{Multi-IRS, $K = 8$}: the transceiver design in Section IV is performed under ${{\cal P}_{{\rm{IRS}}}} = \tilde {\cal P}_{{\rm{IRS}}}^{\rm{c}}$. For all the considered schemes, we plot the achievable rate of the CU versus $N$ in Fig. \ref{performance1}.

From Fig. \ref{performance1}, it is observed that our proposed design outperforms other benchmark schemes, which demonstrates the importance of the joint optimization of IRS deployment and transceiver designs. Additionally, the performance of the scheme with single-IRS is less sensitive to the increase of $N$ and outperforms the time-switching scheme for small $N$, whereas the time-switching scheme outperforms the scheme with single-IRS for large $N$. The reason is that the DoF of spatial multiplexing in the single-IRS case is limited. Moreover, it is observed that the scheme of $K = 8$ cannot meet the sensing requirement for small $N$ due to the loss of passive beamforming gain. Nevertheless, the achievable rate of $K = 8$ increases significantly as $N$ increases and achieves the same performance as that of the proposed design for large $N$. The result is expected since the DoFs of spatial multiplexing are fully unlocked by deploying large $N$, whereas the sensing performance requirement can be also met due to a high passive beamforming gain of large $N$.

\vspace{-2pt}
\section{Conclusion}
This paper studied the fundamental tradeoff between sensing and communication in a hybrid multi-IRS-aided ISAC system from the viewpoint of multi-path exploitation versus reduction. First, the impact of the number of deployed IRSs on both the sensing and communication performance was analyzed, which demonstrates that increasing the number of deployed IRSs is beneficial for improving DoFs of spatial multiplexing for communication while increasing the CRB of target estimation. To characterize this tradeoff, we studied a rate maximization problem, by optimizing the BS transmit covariance matrix, IRS phase-shifts, and the number of deployed IRSs, subject to a maximum CRB constraint. Analytical results unveiled that the communication-oriented design becomes optimal when the total number of IRS elements exceeds a certain threshold. This provides useful guidelines for the transceiver design in ISAC systems with large IRSs. Finally, numerical results were provided to validate the theoretical findings.

\section*{Appendix A: \textsc{Proof of Lemma 1}}
When $\left| {\mu _{{\rm{BI}},k}^{\rm{D}} - \mu _{{\rm{BI}},i}^{\rm{D}}} \right| = \frac{{2m}}{{{M_t}}}$ holds, we have
\begin{align}\label{correlation}
\begin{array}{*{20}{l}}
{{\bf{\tilde a}}_{\rm{B}}^H\left( {\mu _{{\rm{BI}},k}^{\rm{D}}} \right){{{\bf{\tilde a}}}_{\rm{B}}}\left( {\mu _{{\rm{BI}},i}^{\rm{D}}} \right)}\\
{ = \frac{{{e^{ - j\frac{{\left( {{M_t} - 1} \right)\pi \left( {\mu _{{\rm{BI}},{\rm{k}}}^{\rm{D}}{\rm{ - }}\mu _{{\rm{BI}},{\rm{i}}}^{\rm{D}}} \right)}}{2}\sum\limits_{m = 0}^{{M_t} - 1} {{e^{j\pi m\left( {\mu _{{\rm{BI}},{\rm{k}}}^{\rm{D}}{\rm{ - }}\mu _{{\rm{BI}},{\rm{i}}}^{\rm{D}}} \right)}}} }}}}{{\sqrt {{M_t}{M_r}} }}}\\
{ = \frac{{{e^{ - j\frac{{\left( {{M_t} - 1} \right)\pi \left( {\mu _{{\rm{BI}},{\rm{k}}}^{\rm{D}}{\rm{ - }}\mu _{{\rm{BI}},{\rm{i}}}^{\rm{D}}} \right)}}{2}}}}}{{\sqrt {{M_t}{M_r}} }}\frac{{1 - {e^{j2m\pi }}}}{{1 - {e^{j\left( {\mu _{{\rm{BI}},{\rm{k}}}^{\rm{D}}{\rm{ - }}\mu _{{\rm{BI}},{\rm{i}}}^{\rm{D}}} \right)}}}}}\\
{ = 0,\forall k \ne i.}
\end{array}
\end{align}
It is obvious that ${\bf{\tilde a}}_{\rm{B}}^H\left( {\mu _{{\rm{BI}},k}^{\rm{D}}} \right){{{\bf{\tilde a}}}_{\rm{B}}}\left( {\mu _{{\rm{BI}},k}^{\rm{D}}} \right)= 1$, which leads to ${\bf{V}}_{\rm{c}}^H{{\bf{V}}_{\rm{c}}} = {{\bf{I}}_K}$.
Similarly, it can be shown that ${\bf{U}}_{\rm{c}}^H{{\bf{U}}_{\rm{c}}} = {{\bf{I}}_K}$  due to its brevity.

\section*{Appendix B: \textsc{Proof of Theorem 1}}
To derive the CRB of the ${{\mu _{\rm{T}}}}$ estimation, we rewrite \eqref{received_signals_sensors_modify} as follows
\begin{align}\label{received_signals_sensors_modify1}
{{{\bf{\bar y}}}_s}\left( t \right) &\mathop  = \limits^{\left( a \right)}  \beta {\rho _{{\rm{TS}}}}{\rho _{{\rm{IT}},1}}{{\bf{a}}_s}\left( {{\mu _{\rm{T}}}} \right){\bf{b}}_{{\rm{I}},1}^H\left( {{\mu _{\rm{T}}}} \right){{\bf{\Theta }}_1}{{\bf{G}}_1}{\bf{x}}\left( t \right)\nonumber\\
& + {{\bf{a}}_s}\left( {{\mu _{\rm{T}}}} \right)\sum\limits_{k = 2}^K {\beta {\rho _{{\rm{TS}}}}{\rho _{{\rm{IT}},k}}{\bf{b}}_{{\rm{I}},k}^H\left( {\mu _{{\rm{IT}},k}^{\rm{D}}} \right){{\bf{\Theta }}_k}{{\bf{G}}_k}{\bf{x}}\left( t \right)}  \!\!+\!\! {{\bf{n}}_s}\left( t \right)\nonumber\\
&\approx \tilde \beta {{\bf{a}}_s}\left( {{\mu _{\rm{T}}}} \right){\bf{b}}_{{\rm{I}},1}^H\left( {{\mu _{\rm{T}}}} \right){{\bf{\Theta }}_1}{{\bf{G}}_1}{\bf{x}}\left( t \right)\nonumber\\
& + \tilde \beta {{\bf{a}}_s}\left( {{\mu _{\rm{T}}}} \right)\sum\limits_{k = 2}^K {{\bf{b}}_{{\rm{I}},k}^H\left( {\mu _{{\rm{IT}},k}^{\rm{D}}} \right){{\bf{\Theta }}_k}{{\bf{G}}_k}{\bf{x}}\left( t \right)} \!\!+\!\! {{\bf{n}}_s}\left( t \right),
\end{align}
where ${\left( a \right)}$ holds since the target is in the far field of the semi-passive IRS, i.e., ${\mu _{\rm{T}}} = \mu _{{\rm{IT}},1}^{\rm{D}}$ and ${\left| {\tilde \beta } \right|^2} = \max \left\{ {{{\left| \beta  \right|}^2}\rho _{{\rm{TS}}}^2\rho _{{\rm{IT}},1}^2, \ldots {{\left| \beta  \right|}^2}\rho _{{\rm{TS}}}^2\rho _{{\rm{IT}},K}^2} \right\}$. Let
\begin{align}\label{temp}
{\bf{A}}\left( {{\mu _{\rm{T}}}} \right) =& {{\bf{a}}_s}\left( {{\mu _{\rm{T}}}} \right){\bf{b}}_{{\rm{I}},1}^H\left( {{\mu _{\rm{T}}}} \right){{\bf{\Theta }}_1}{{\bf{G}}_1}\nonumber\\
&+ {{\bf{a}}_s}\left( {{\mu _{\rm{T}}}} \right)\sum\nolimits_{k = 2}^K {{\bf{b}}_{{\rm{I}},k}^H\left( {\mu _{{\rm{IT}},k}^{\rm{D}}} \right){{\bf{\Theta }}_k}{{\bf{G}}_k}}
\end{align}
and we vectorize ${{{\bf{\bar Y}}}_s} = \left[ {{{{\bf{\bar y}}}_s}\left( 1 \right), \ldots ,{{{\bf{\bar y}}}_s}\left( T \right)} \right]$ as
\begin{align}\label{sensing_signal_appro}
{\mathop{\rm vec}\nolimits} \left( {{{{\bf{\bar Y}}}_s}} \right) \approx \tilde \beta {\mathop{\rm vec}\nolimits} \left( {{\bf{A}}\left( {{\mu _{\rm{T}}}} \right){\bf{X}}} \right) + {{\bf{N}}_s},
\end{align}
where ${\bf{X}} = \left[ {{\bf{x}}\left( 1 \right), \ldots ,{\bf{x}}\left( T \right)} \right]$ and ${{\bf{N}}_s} = \left[ {{{\bf{n}}_s}\left( 1 \right), \ldots ,{{\bf{n}}_s}\left( T \right)} \right]$.

Since we focus on characterizing the performance of estimating the angle, we define ${{{\bm{\tilde \beta }}}_s} = {\left[ {{\mathop{\rm Re}\nolimits} \left\{ {\tilde \beta } \right\},{\mathop{\rm Im}\nolimits} \left\{ {\tilde \beta } \right\}} \right]^T}$. Let ${\bm{\xi }} = {\left[ {{\mu _{\rm{T}}},{{{\bf{\tilde \beta }}}_s}^T} \right]^T}$ and then the Fisher information matrix (FIM) with respect to the estimated parameters ${\bm{\xi }}$ is given by
\begin{align}\label{FIM}
{\bf{F}} = \left[ {\begin{array}{*{20}{c}}
{{{\bf{F}}_{{\mu _{\rm{T}}},{\mu _{\rm{T}}}}}}&{{{\bf{F}}_{{\mu _{\rm{T}}},{{{\bm{\tilde \beta }}}_s}}}}\\
{{\bf{F}}_{{\mu _{\rm{T}}},{{{\bm{\tilde \beta }}}_s}}^T}&{{{\bf{F}}_{{{{\bm{\tilde \beta }}}_s},{{{\bm{\tilde \beta }}}_s}}}}
\end{array}} \right],
\end{align}
where each element in ${\bf{F}}\left( {\bf{\xi }} \right)$ is defined as
\begin{align}\label{FIM_element}
{{\bf{F}}_{l,m}} = \frac{2}{{\sigma _s^2}}{\mathop{\rm Re}\nolimits} \left\{ {\frac{{\partial {{{\bm{\tilde u}}}^H}}}{{\partial {{\bm{\xi }}_i}}}\frac{{\partial {\bm{\tilde u}}}}{{\partial {{\bm{\xi }}_j}}}} \right\},l,m \in \left\{ {1,2,3} \right\}
\end{align}
with ${\bf{\tilde u}} = \tilde \beta {\rm{vec}}\left( {{\bf{A}}\left( {{\mu _{\rm{T}}}} \right){\bf{X}}} \right)$. Based on the FIM, the CRB of the estimated ${{\mu _{\rm{T}}}}$ is given by
\begin{align}\label{CRB_theta_def}
{\mathop{\rm CRB}\nolimits} \left( {{\mu _{\rm{T}}}} \right) = {\left[ {{{\bf{F}}_{{\mu _{\rm{T}}},{\mu _{\rm{T}}}}} - {{\bf{F}}_{{\mu _{\rm{T}}},{{{\bm{\tilde \beta }}}_s}}}{\bf{F}}_{{{{\bm{\tilde \beta }}}_s},{{{\bm{\tilde \beta }}}_s}}^{ - 1}{\bf{F}}_{{\mu _{\rm{T}}},{{{\bm{\tilde \beta }}}_s}}^T} \right]^{ - 1}}.
\end{align}
In \eqref{CRB_theta_def}, ${{{\bf{F}}_{{\mu _{\rm{T}}},{\mu _{\rm{T}}}}}}$, ${{{\bf{F}}_{{\mu _{\rm{T}}},{{{\bm{\tilde \beta }}}_s}}}}$, and ${{{\bf{F}}_{{{{\bm{\tilde \beta }}}_s},{{{\bm{\tilde \beta }}}_s}}}}$ can be calculated as
\begin{align}\label{entry1}
&{{\bf{F}}_{{\mu _{\rm{T}}},{\mu _{\rm{T}}}}} = \frac{{2T{{\left| {\tilde \beta } \right|}^2}}}{{\sigma _s^2}}{\mathop{\rm Tr}\nolimits} \left( {{\bf{\dot A}}\left( {{\mu _{\rm{T}}}} \right){{\bf{R}}_x}{{{\bf{\dot A}}}^H}\left( {{\mu _{\rm{T}}}} \right)} \right)\nonumber\\
&{{\bf{F}}_{{\mu _{\rm{T}}},{{{\bm{\tilde \beta }}}_s}}} = \frac{{2T}}{{\sigma _s^2}}{\mathop{\rm Re}\nolimits} \left\{ {{{\tilde \beta }^H}{\mathop{\rm Tr}\nolimits} \left( {{\bf{A}}\left( {{\mu _{\rm{T}}}} \right){{\bf{R}}_x}{{{\bf{\dot A}}}^H}\left( {{\mu _{\rm{T}}}} \right)} \right)\left( {1,j} \right)} \right\},\nonumber\\
&{{\bf{F}}_{{{{\bm{\tilde \beta }}}_s},{{{\bm{\tilde \beta }}}_s}}} = \frac{{2T}}{{\sigma _s^2}}{\rm{Tr}}\left( {{\bf{A}}\left( {{\mu _{\rm{T}}}} \right){{\bf{R}}_x}{{\bf{A}}^H}\left( {{\mu _{\rm{T}}}} \right)} \right){{\bf{I}}_2},
\end{align}
where
\begin{align}\label{med_term}
{\bf{\dot A}}\left( {{\mu _{\rm{T}}}} \right) =&{{{\bf{\dot a}}}_s}\left( {{\mu _{\rm{T}}}} \right)\sum\nolimits_{k = 1}^K {{\rho _{{\rm{BI}},k}}\gamma _k^s{\bf{a}}_{\rm{B}}^H\left( {\mu _{{\rm{BI}},k}^{\rm{D}}} \right)}\nonumber\\
&+ {{\bf{a}}_s}\left( {{\mu _{\rm{T}}}} \right){\rho _{{\rm{BI}},1}}\tilde \gamma _1^s{\bf{a}}_{\rm{B}}^H\left( {\mu _{{\rm{BI}},1}^{\rm{D}}} \right),
\end{align}
with $\gamma _k^s = {\bf{b}}_{{\rm{I}},k}^H\left( {\mu _{{\rm{IT}},k}^{\rm{D}}} \right){{\bf{\Theta }}_k}{{\bf{b}}_{{\rm{I}},k}}\left( {\mu _{{\rm{BI}},k}^{\rm{A}}} \right),\forall k \in {\cal K}$, and $\tilde \gamma _1^s = {\bf{\dot b}}_{{\rm{I}},1}^H\left( {{\mu _{\rm{T}}}} \right){{\bf{\Theta }}_k}{{\bf{b}}_{{\rm{I}},1}}\left( {\mu _{{\rm{BI}},1}^{\rm{A}}} \right)$. Finally, by plugging \eqref{med_term} into \eqref{CRB_theta_def}, the CRB of the estimated ${{\mu _{\rm{T}}}}$ is obtained in Theorem 1.

\section*{Appendix C: \textsc{Proof of Theorem 2}}
Since problem \eqref{C8} is convex and also satisfies Slater's condition, the dual gap between \eqref{C8} and its dual problem is zero. This demonstrates that the optimal solution can be derived by analyzing  its Karush-Kuhn-Tucker (KKT) conditions. We write the partial Lagrangian function of \eqref{C8} as
\begin{align}\label{Lagrangian_function}
{\cal L} =& \sum\limits_{k = 1}^K {{{\log }_2}\left( {1 + \frac{{{M_t}{M_r}{p_{c,k}}\rho _{{\rm{BI}},k}^2\rho _{{\rm{IU}},k}^2N_k^2}}{{{\sigma ^2}}}} \right)}\nonumber\\
&  + \mu \left( {{p_s}{M_t}\sum\limits_{k = 1}^K {\rho _{{\rm{BI}},k}^2N_k^2}  + {M_t}\sum\limits_{k = 1}^K {\rho _{{\rm{BI}},k}^2N_k^2} {p_{c,k}} - {\Gamma _s}} \right) \nonumber\\
& + \lambda \left( {{P_{\max }} - \sum\limits_{k = 1}^K {{p_{c,k}} - {p_s}} } \right) + \nu {p_s},
\end{align}
where $\mu$, $\lambda$, and $\nu $ denote the Lagrange multipliers associated with constraints \eqref{C8-b}, \eqref{C7-c}, and ${p_s} \ge 0$, respectively. Denote the optimal solution of problem \eqref{C8} and its dual problem as $\left\{ {{p_{c,k}^ \star},{p_s^ \star}} \right\}$ and $\left\{ {\mu^\star ,\lambda^ \star, \nu^ \star } \right\}$, respectively. Following the complementary slackness condition, ${p_s^ \star}\nu^ \star  = 0$ must hold. By taking the partial derivative of \eqref{Lagrangian_function} with respect to ${\left\{ {{p_{c,k}}} \right\}}$ and ${{p_s}}$, respectively, we obtain
\begin{align}\label{KKT_condition1}
\frac{{\partial {\cal L}}}{{\partial {p_{c,k}}}} =&\frac{1}{{\ln 2}}\frac{{{M_t}{M_r}\rho _{{\rm{BI}},k}^2\rho _{{\rm{IU}},k}^2N_k^2}}{{{\sigma ^2} + {M_t}{M_r}{p_{c,k}}\rho _{{\rm{BI}},k}^2\rho _{{\rm{IU}},k}^2N_k^2}}\nonumber\\
&  + \mu {M_t}\rho _{{\rm{BI}},k}^2N_k^2 - \lambda,
\end{align}
\begin{align}\label{KKT_condition2}
\hspace{-0.6cm}\frac{{\partial {\cal L}}}{{\partial {p_s}}} = \mu {M_t}\sum\nolimits_{k = 1}^K {\rho _{{\rm{BI}},k}^2N_k^2}  - \lambda  + \nu.
\end{align}
Then, we derive the optimal solution of problem \eqref{C8} by considering the cases of $p_s^ \star  > 0$ and $p_s^ \star  = 0$, respectively.

For the case that the sensing signal is needed at the optimal solution, i.e., $p_s^ \star  > 0$, we have $\nu^ \star  = 0$ according to ${p_s^ \star}\nu^ \star  = 0$. In this case, $\frac{{\partial {\cal L}}}{{\partial {p_s}}} = 0$, which yields
\begin{align}\label{dual_varialbes_relation}
\lambda  = \mu {M_t}\sum\limits_{k = 1}^K {\rho _{{\rm{BI}},k}^2N_k^2}.
\end{align}
By substituting \eqref{dual_varialbes_relation} into \eqref{KKT_condition1}, we obtain
\begin{align}\label{power_com1}
\hspace{-0.2cm}p_{c,k}^ \star \left( \mu  \right) \!=\! {\left[ {\frac{1}{{\mu {M_t}\sum\nolimits_{l \ne k}^K {\rho _{{\rm{BI}},l}^2N_l^2} }} \!-\! \frac{{{\sigma ^2}}}{{{M_t}{M_r}\rho _{{\rm{BI}},k}^2\rho _{{\rm{IU}},k}^2N_k^2}}} \right]^ + }.
\end{align}
It is obvious that constraint \eqref{C7-c} mets with equality at the optimal solution since otherwise the objective value can be increased by increasing certain ${p_{c,k}}$'s while satisfies the maximum transmit power constraint. As a result, we have ${p_s^\star} = {P_{\max }} - \sum\nolimits_{k = 1}^K p_{c,k}^ \star \left( \mu  \right)$. Since constraint \eqref{C8-b} is tight at the optimal solution, we obtain
\begin{align}\label{dual_varialbes_relation2}
{{\cal G}_{{\rm{s,1}}}}\left( \mu  \right)  \buildrel \Delta \over = & {M_t}\sum\nolimits_{k = 1}^K {\left( {{P_{\max }} - \sum\nolimits_{l \ne k}^K {p_{c,l}^ \star \left( \mu  \right)} } \right){{\left| {{\rho _{{\rm{BI}},k}}} \right|}^2}N_k^2}\nonumber\\
& - {\Gamma _s} = 0.
\end{align}
It is observed from \eqref{power_com1} that $p_{c,k}^ \star \left( \mu  \right)$ decreases with respect to $\mu$, which leads to ${{\cal G}_{{\rm{s,1}}}}\left( \mu  \right)$ is an increasing function of $\mu$. As such, $\mu^ \star$ is the unique solution of equation \eqref{dual_varialbes_relation2}, which can be obtained by using a bisection search. Hence, the condition
\begin{align}\label{sensing_signal_condition}
&\sum\nolimits_{k = 1}^K {{{\left[ {\frac{1}{{\mu^\star {M_t}\sum\nolimits_{l \ne k}^K {\rho _{{\rm{BI}},l}^2N_l^2} }} - \frac{{{\sigma ^2}}}{{{M_t}{M_r}\rho _{{\rm{BI}},k}^2\rho _{{\rm{IU}},k}^2N_k^2}}} \right]}^ + }} \nonumber\\
&< {P_{\max }}
\end{align}
must hold if the dedicated sensing signal is used, i.e., $p_s^ \star  > 0$. On the other hand, if the condition \eqref{sensing_signal_condition} holds, a non-trivial ${p_s^ \star} = {P_{\max }} - \sum\nolimits_{k = 1}^K {{p_{c,k}}\left( \mu ^ \star \right)}$ can be obtained, which satisfies the KKT conditions and is thereby optimal to problem \eqref{C8}.

Then, we have $p_s^ \star  = 0$ provided that condition \eqref{sensing_signal_condition} is not satisfied. In this case, based on \eqref{KKT_condition1} and \eqref{KKT_condition1}, we obtain that
\begin{align}\label{power_com2}
&p_{c,k}^ \star \left( {\mu ,\nu } \right) \nonumber\\
& = {\left[ {\frac{1}{{\mu {M_t}\sum\nolimits_{l \ne k}^K {\rho _{{\rm{BI}},l}^2N_l^2}  + \nu }} \!\!-\!\! \frac{{{\sigma ^2}}}{{{M_t}{M_r}\rho _{{\rm{BI}},k}^2\rho _{{\rm{IU}},k}^2N_k^2}}} \right]^ + }.
\end{align}
In particular, under the case of $\mu^ \star = 0$, we have
\begin{align}\label{power_com3}
p_{c,k}^ \star \left( \nu  \right) = {\left[ {\frac{1}{\nu } - \frac{{{\sigma ^2}}}{{{M_t}{M_r}\rho _{{\rm{BI}},k}^2\rho _{{\rm{IU}},k}^2N_k^2}}} \right]^ + },
\end{align}
where $\nu^\star$ can be obtained by solving an equation $\sum\nolimits_{k = 1}^K {p_{c,k}^ \star \left( \nu  \right)}  = {P_{\max }}$ based on bisection searching. To guarantee that \eqref{power_com3} is optimal, the condition
\begin{align}\label{sensing_signal_condition_suffi}
\sum\limits_{k = 1}^K {{{\left[ {\frac{{{M_t}\rho _{{\rm{BI}},k}^2N_k^2}}{{{\nu ^ \star }}} - \frac{{{\sigma ^2}}}{{{M_r}\rho _{{\rm{IU}},k}^2}}} \right]}^ + } \ge {\Gamma _s}}.
\end{align}
must hold. Otherwise, $\mu^\star > 0$ holds and the optimal ${p_{c.k}}$ is given in \eqref{power_com2}, where $\mu^\star$ and $\nu^\star$ can be obtained by solving a couple of equations, i.e., $\sum\nolimits_{k = 1}^K {p_{c,k}^ \star \left( {\mu ,\lambda } \right)}  = {P_{\max }}$ and $\sum\nolimits_{k = 1}^K {p_{c,k}^ \star \left( {\mu ,\lambda } \right)} {M_t}\rho _{{\rm{BI}},k}^2N_k^2 = {\Gamma _s}$. Thus, we complete the proof.

\bibliographystyle{IEEEtran}

\end{document}